\newtheorem{theorem}{Theorem}
\newtheorem{proposition}{Proposition}
\newtheorem{lemma}{Lemma}
\newtheorem{example}{Example}
\newcommand{\qw}[1][-1]{\ar @{-} [0,#1]}
\newcommand{\qwx}[1][-1]{\ar @{-} [#1,0]}
\newcommand{\cw}[1][-1]{\ar @{=} [0,#1]}
\newcommand{\gate}[1]{*+<.6em>{#1} \POS ="i","i"+UR;"i"+UL **\dir{-};"i"+DL **\dir{-};"i"+DR **\dir{-};"i"+UR **\dir{-},"i" \qw}
\newcommand{\measureD}[1]{*{\xy*+=<0em,.1em>{#1}="e";"e"+UR+<0em,.25em>;"e"+UL+<-.5em,.25em> **\dir{-};"e"+DL+<-.5em,-.25em> **\dir{-};"e"+DR+<0em,-.25em> **\dir{-};{"e"+UR+<0em,.25em>\ellipse^{}};"e"+C:,+(0,1)*{} \endxy} \qw}
\newcommand{\control}{*!<0em,.025em>-=-<.2em>{\bullet}}
\newcommand{\ctrl}[1]{\control \qwx[#1] \qw}
\newcommand{\rstick}[1]{*!L!<-.5em,0em>=<0em>{#1}}
\newcommand{\lstick}[1]{*!R!<.5em,0em>=<0em>{#1}}
\newcommand{\Qcircuit}{\xymatrix @*=<0em>}
\begin{document}
\title{Noise in One-Dimensional Measurement-Based Quantum Computing}
\author{Na\"iri Usher}
\email{ucapnus@ucl.ac.uk}
\affiliation{Department of Physics and Astronomy, University College London, Gower Street, London WC1E 6BT, United Kingdom.}
\author{ Dan E. Browne}
\affiliation{Department of Physics and Astronomy, University College London, Gower Street, London WC1E 6BT, United Kingdom.}
\begin{abstract}
Measurement-Based Quantum Computing (MBQC) is an alternative to the quantum circuit model, whereby the computation proceeds via measurements on an entangled resource state. Noise processes are a major experimental challenge to the construction of a quantum computer. Here, we investigate how noise processes affecting physical states affect the performed computation by considering MBQC on a one-dimensional cluster state. This allows us to break down the computation in a sequence of building blocks and map physical errors to logical errors. Next, we extend the Matrix Product State construction to mixed states (which is known as Matrix Product Operators) and once again map the effect of physical noise to logical noise acting within the correlation space.  This approach allows us to  consider more general errors than the conventional Pauli errors, and could be used in order to simulate noisy quantum computation. 
\end{abstract}
\maketitle
\section{Introduction}
Quantum computing is a novel and powerful paradigm of computation, whereby quantum algorithms offer the possibility of a speed-up over their classical counterpart \cite{shor1999polynomial}. However, a major challenge to the construction of a universal and scalable quantum computer is the sensitivity to noise of quantum states and operations. Indeed, our current models of computation and algorithms are developed for noiseless systems, that is, for pure states undergoing unitary evolution. In practice, noise processes affect the computation from preparation to measurement, thus corrupting the computation and rendering the output useless. Hence, it is important to understand how noise acting on physical states and operations affects the computation at hand.

Measurement Based Quantum Computing (MBQC) \cite{nielsen2006cluster, raussendorf2001one, briegel2009measurement, jozsa2006introduction, raussendorf2003measurement} is an alternative model of computation to the circuit model, whereby the computation is implemented by performing single qubit adaptive measurements on an entangled resource state. The resource state considered is the cluster state, which consists of qubits initialised in the $|+\rangle$ state arranged on a lattice structure, where neighbouring qubits are entangled via a controlled-$Z$ gate. By performing single qubit measurements in either the computational basis or onto the equatorial plane of the Bloch sphere, a unitary operator is applied to the input state, up to a random by-product operators dependent on the measurement outcome. 

One-dimensional MBQC refers to MBQC performed on a cluster state of dimension one, i.e. qubits placed on a line. It has been shown that a computation performed on a one-dimensional resource state is classically efficiently simulatable  \cite{van2007classical, vidal2003efficient, jozsa2006simulation, markov2008simulating}, and thus two or three dimensional resource states are typically considered. Nonetheless, the study of one-dimensional MBQC has allowed for interesting results, such as for different families of resource states to be studied, and in addition, it has notably been shown that one-dimensional MBQC can be represented with Matrix Product States (MPS) \cite{perez2006matrix, gross2007measurement, gross2007novel, gross2010quantum}.

MPS belong to the family of tensor network states, which include Projected Entangled Pair States (PEPS) and Multi-Scale Entanglement Renormalization Ansatz (MERA) and which refer to classes of quantum states which can be described using a tensor network. There has long been a close relationship between tensor networks and MBQC. Indeed, the PEPS formalism itself \cite{verstraete2004valence} was first introduced as a tool for the study of MBQC, and its relationship with gate teleportation.

It has been shown that tensor network states offer good approximations to ground states of certain common physical Hamiltonians, as illustrated by the success of Density Matrix Renormalization Group (DMRG) numerical methods, which is a variational algorithm over MPS \cite{verstraete2006matrix}. In addition, these can be used to describe the state of a system and its evolution over time, and can also be diagrammatically represented. The dimension of the tensor network depends on the complexity of the state, as for instance determined by the locality of its interactions or its entanglement. 

MPS is a tensor network framework for one-dimensional quantum states, which is efficient in the case when the amount of bipartite entanglement is bounded. Here, the complex coefficients are expressed as a product of a polynomial number of matrices. The size of these matrices is determined by a parameter called the bond size or Schmidt rank, a quantity connected to the amount of bipartite entanglement present in the system. Vidal \cite{vidal2003efficient} showed that, when the size of these matrices is bounded, MPS states can be used to classically efficiently simulate slightly entangled quantum computation. 

Beyond pure states and unitary or isometric evolution, tensor networks have also provided powerful tools for the study of mixed states and noisy, non-unitary evolution. For example, the Matrix Product Operators (MPO) framework, generalises MPS from states to (density) operators \cite{zwolak2004mixed, verstraete2004matrix, pirvu2010matrix}.

In the following, we consider one-dimensional MBQC and introduce two frameworks for one-dimensional mixed state computation. This in turn allows us to study how physical noise acting on quantum states is mapped to logical errors acting on the computation. Here,  noise is modelled as the application of a local, single qubit, noise channel to the state, as these are simple yet relevant for experimental quantum computing. Moreover, whereas error correction schemes focus on the occurrence of Pauli errors, our frameworks allow for the analysis of more general noise models, and can thus be used to simulate noisy quantum channels. 

Our first result is a theorem mapping local noise in a one-dimensional MBQC to errors on the output of the computation. This shows us that even in the case of what we call a noisy resource state or a noisy measurement, the logical errors acting on the the output of the computation can be determined. This is achieved by breaking down the computation into what we call fundamental blocks, which can subsequently be composed in order to form the complete computation. This simplifies the task at hand, and allows us to study the action of local noise channels on the directly on the fundamental block itself. 

Our second result is twofold: first, we propose an expression for Matrix Product Operators (MPOs), closely related to what was proposed in \cite{zwolak2004mixed, verstraete2004matrix, pirvu2010matrix}. Next, we use this framework in order to study one-dimensional noisy MBQC. More precisely, we will express the cluster state in this framework and study how local, single qubit noise channels acting on the physical qubits transform the associated logical superoperators acting on the correlation space. 

The standard approach towards achieving fault tolerance is to devise error corrections codes whereby errors are first detected and then corrected, thus allowing us to recover the framework of pure states undergoing unitary operations. On the other hand, in the one clean qubit model of computation \cite{shor2007estimating}, an $n$-qubit maximally mixed state is used as a resource in order to perform an interesting task, which could not classically be achieved. Thus, we can ask whether other classes of noisy resource states and computations could be exploited in order to perform a non-trivial computation. In the following, we adopt the approach of understanding the effect of noise on the computation, and argue that such a model could be used in order to simulate noisy quantum computation.

This paper is structured as follows. In section \ref{sec:motivation}, we consider quantum teleportation in the presence of a noisy resource state. This provides us, in section \ref{sec:noisyMBQC}, with a motivation to more generally study one-dimensional MBQC in the context of mixed resource states. By breaking down the computation in a single building block, we map the effect of local Pauli operators acting on the state to noise on the output. In section \ref{sec: MPO}, we derive an expression for MPOs, and subsequently, in section \ref{sec:MPOMBQC} study one-dimensional MBQC in the MPO framework, thus finding how local noise acting on the physical qubits transform the logical superoperators acting within the correlations space. Finally, in section \ref{sec:discussion} we conclude with a discussion of our results and further research directions to be investigated. 

\section{Motivation} \label{sec:motivation}
The entanglement present in a multi-qubit state is a resource which can be exploited for quantum computation and communication tasks. For example, quantum teleportation allows for information to be transmitted by utilising the entanglement present in the Bell state $|B_{00}\rangle = \frac{1}{\sqrt{2}} (|00\rangle + |11\rangle)$ \cite{bennett1993teleporting}. Here, Alice and Bob are each in possession of one of the two qubits, and in addition Alice has an arbitrary single qubit mixed state  $\rho= \sum_{u, v} \alpha_{uv}|u \rangle \langle v|$ which she wishes to teleport to Bob. By performing a Bell measurement on her pair of qubits, Alice can transmit the unknown state $\rho$ to Bob, up to some additional Pauli operators dependent on the measurement outcome. If she thus sends the two bits of information corresponding to her measurement outcomes to Bob, this will allow him to undo the Pauli operators, thus leaving him in possession of the original unknown state $\rho$. 
\begin{figure}[h!]
	\centering
	\begin{tikzpicture}
	\definecolor{qu}{RGB}{77, 77, 255}
	\definecolor{cz}{RGB}{254, 68, 0}
	\draw[cz] (2,0)--(4,0);
	\node[fill, scale = 0.7] at (0,0) [circle,draw] {};
	\node[fill, scale = 0.7] at (2,0) [circle,draw] {};
	\node[fill, scale = 0.7] at (4,0) [circle,draw] {};
	{
		\scalefont{1}
		\node[text=black] at (1, -1) {Alice};
		\node[text=black] at (4.5, -1) {Bob};
	}
	{
		\scalefont{0.6}
		\node[text=black] at (0, -0.3) {$1$};
		\node[text=black] at (2, -0.3) {$2$};
		\node[text=black] at (4, -0.3) {$3$};
	}
	\draw (1,0) node[dotted, minimum height=1cm,minimum width=3cm,draw] {};
	\end{tikzpicture}
	\caption[Teleportation.]{Alice is in possession of the two leftmost qubits (labelled $1$ and $2$), on which she performs a Bell measurement (illustrated with the dotted box), whilst Bob has the qubit on the right (labelled $3$). The qubits $2$ and $3$ are the entangled resource state.}
\end{figure}
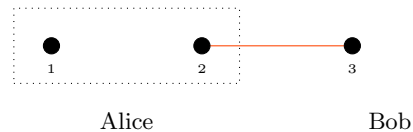

Crucially, this protocol relies on the ability to prepare the pure maximally entangled Bell state $|B_{00}\rangle $. In the following, we ask what would be the effect of instead having access to a noisy mixed state as a resource. This is modelled by the application of a single qubit local Pauli channel $\varepsilon$ to the originally entangled state $|B_{00}\rangle $. 

More precisely, we consider a noisy process whereby the Pauli operators $X^k Z^l$ are applied with probability $p_{kl}$ to the second qubit of the entangled qubit pair. This results in the original resource state evolving to one of the four orthonormal Bell states $|B_{ij}\rangle =\frac{1}{\sqrt{2}}(\mathbb{I} \otimes X^i Z^j)(|00\rangle + {11 \rangle})$, and the resource state is thus now given by $ \Lambda  =  \sum_{i,j} \frac{1}{2} |i\rangle \langle j| \otimes \varepsilon (|i\rangle \langle j| )$. This channel is assumed to be a Completely Positive Trace Preserving (CPTP) map, thus having Kraus decomposition $\varepsilon (\rho)= \sum_m K_m \rho K_m^\dag$, where the Kraus operators must satisfy $\sum_m K_m^\dag K_m=\mathbb{I}$. Here, the Kraus operators are given by $K_{kl}=\sqrt{p_{kl}} X^kZ^l$, and thus the resource state can equivalently be written as: 
\begin{equation}
\Lambda  = \sum_{k,l} \sum_{i,j} \frac{1}{2} p_{kl} (\mathbb{I}\otimes X^k Z^l) |ii \rangle \langle jj| (\mathbb{I}\otimes Z^l X^k). 
\end{equation}
Henceforth, we shall refer to such a state as a diagonal resource state. Indeed, the set of four Bell states form an orthonormal basis, and thus any arbitrary two-qubit state can be expressed as $\rho = \sum \rho_{ijkl}|B_{ij} \rangle \langle B_{kl}|$, with the special case $\rho = \sum_{ij} \rho_{ij} |B_{ij} \rangle \langle B_{ij}|$, being viewed as diagonal. 

In the case of having access to a noisy diagonal resource state, we find that if $\rho$ is the unknown state we wish to teleport, then this will instead result in the state  $\varepsilon (\rho)$ being teleported, where $\varepsilon$ refers to the original noise channel applied to the resource. Thus, this situation is equivalent to a noise channel acting directly on the arbitrary single qubit input state. More formally:  
\begin{proposition} \label{prop: diag}
Given a resource state diagonal in the Bell basis $\varepsilon(\rho)$, the application of the teleportation protocol results in the state $X^s Z^t  \varepsilon (\rho )  Z^t X^s$ being teleported. 
\end{proposition}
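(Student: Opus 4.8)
\medskip
\emph{Proof idea.} The plan is to exploit the linearity of the teleportation protocol in the resource state in order to reduce the claim to the noiseless case. First I would write the protocol input as $\rho_1\otimes\Lambda_{23}$, where $\rho$ is Alice's unknown qubit on wire $1$ and $\Lambda=(\mathbb{I}\otimes\varepsilon)(|B_{00}\rangle\langle B_{00}|)=\sum_{k,l}p_{kl}\,|B_{kl}\rangle\langle B_{kl}|$ is the noisy resource on wires $2,3$; the second equality is precisely the observation, made above, that a Pauli channel applied to one half of $|B_{00}\rangle$ produces a mixture of the four Bell states. Since the Bell measurement on $1,2$, the conditioning on a given outcome, and the partial trace over $1,2$ are all linear in the state, it is enough to run the protocol with the resource fixed to a single Bell state $|B_{kl}\rangle$ and then recombine the general case as the $p_{kl}$-weighted average.

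For a fixed resource $|B_{kl}\rangle=(\mathbb{I}\otimes X^kZ^l)|B_{00}\rangle$ I would carry out the usual teleportation computation: expanding $|B_{00}\rangle=\tfrac{1}{\sqrt2}\sum_a|aa\rangle$ and collapsing the resulting sums with $\sum_c|c\rangle\langle c|=\mathbb{I}$, one finds for a pure input that $\langle B_{st}|_{12}\bigl(|\psi\rangle_1\otimes|B_{kl}\rangle_{23}\bigr)=\tfrac12\,X^kZ^l\,X^sZ^t|\psi\rangle_3$. Extending this to $\rho=\sum_{u,v}\alpha_{uv}|u\rangle\langle v|$ by linearity shows that the outcome $(s,t)$ occurs with probability $\tfrac14$, independently of both $\rho$ and $(k,l)$, and that Bob is left with the normalised state $X^kZ^l\,X^sZ^t\,\rho\,Z^tX^s\,Z^lX^k$.

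Finally I would recombine: averaging over the noise with weights $p_{kl}$ leaves the outcome probability equal to $\tfrac14$ and yields Bob's conditional state as $\sum_{k,l}p_{kl}\,X^kZ^l\,X^sZ^t\,\rho\,Z^tX^s\,Z^lX^k$. Since $X^kZ^l$ and $X^sZ^t$ commute up to a sign $(-1)^{ls+tk}$ that cancels in the conjugation $M\rho M^\dagger$, this reorders to $X^sZ^t\bigl(\sum_{k,l}p_{kl}\,X^kZ^l\,\rho\,Z^lX^k\bigr)Z^tX^s=X^sZ^t\,\varepsilon(\rho)\,Z^tX^s$, which is the assertion (and applying the standard byproduct correction $Z^tX^s$ then returns $\varepsilon(\rho)$). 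I expect the only delicate point to be the bookkeeping of the Pauli commutation phases and of the left/right byproduct conventions through the two basis expansions; the probabilistic statement is automatic once the post-measurement vector is seen to have outcome-independent norm.
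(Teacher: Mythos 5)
Your proposal is correct: the reduction of the Bell-diagonal resource to a convex mixture $\sum_{k,l}p_{kl}|B_{kl}\rangle\langle B_{kl}|$, the pure-state identity $\langle B_{st}|_{12}\bigl(|\psi\rangle_1\otimes|B_{kl}\rangle_{23}\bigr)=\tfrac12 X^kZ^lX^sZ^t|\psi\rangle_3$, the outcome-independent probability $\tfrac14$, and the cancellation of the commutation sign $(-1)^{ls+tk}$ under conjugation together yield exactly $X^sZ^t\varepsilon(\rho)Z^tX^s$. The route differs somewhat from the paper's: there, the joint operator $\rho\otimes\Lambda$ is kept whole, the first two qubits are expanded directly in the Bell basis (with $\Lambda$ written as $\tfrac12\sum_{i,j}|i\rangle\langle j|\otimes\varepsilon(|i\rangle\langle j|)$ and $\varepsilon$ in Kraus form), and the projection $P_{st}$ plus Bell-state orthogonality forces $s=u\oplus i=v\oplus j$, $t=k_1=k_2$, after which the Pauli Kraus operators are commuted past the byproduct $X^sZ^t$ in one density-matrix computation. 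Your version is more modular: it makes the role of the ``diagonal in the Bell basis'' hypothesis transparent (it is precisely what licenses the convex decomposition and the term-by-term reduction to standard teleportation) and only requires the familiar pure-state bookkeeping; the paper's single calculation, by contrast, carries the channel's Kraus operators through explicitly, which keeps the argument closer to the form reused later in the paper (Lemmas 3 and 4), where general Kraus decompositions rather than Bell mixtures are the natural objects. Both hinge on the same phase-cancellation step, and your attention to the left/right byproduct convention is exactly the delicate point; as you note, it works out.
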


In order to further understand the trade-off taking place between the amount of noise affecting the resource state and the entanglement present in the system to be exploited, we apply Proposition \ref{prop: diag} to a couple of examples. First, we consider the worst case scenario of a completely noisy resource state, and find that, as expected, no information will be transmitted. 
\begin{example}
The application of the completely depolarising channels, with corresponding Kraus operators $K_{ij}=\frac{1}{4}\sigma_{ij}$, for $i, j = 0, 1$ results in the resource state being transformed to the maximally mixed state. From proposition $1$, the completely depolarising channel is applied to the transmitted  state $\rho$. Thus, it will now be transformed to the maximally mixed state. This gives us absolutely no information as to what the unknown state $\rho$ was, and our interpretation is that nothing has been teleported. 
\end{example}
Next, we consider the case when a dephasing channel with respect to the computational basis is applied, and find that in this case information has been teleported. 
\begin{example}
When the dephasing channel with respect to the computational basis is applied, the resource state gets mapped to an equal mixture of the two Bell states $|B_{00}\rangle = \frac{|00\rangle + |11\rangle}{\sqrt{2}}$ and $|B_{01}\rangle = \frac{|00\rangle - |11\rangle}{\sqrt{2}}$. We thus have that the resource state can be expressed as $|\Lambda \rangle= 0.5 |B_{00}\rangle + 0.5 |B_{01}\rangle$, which is a separable state. As a channel, this corresponds to Kraus operators  $K_m=\frac{1}{\sqrt{2}}Z^m$. From Proposition \ref{prop: diag}, we have that the teleported output state is now be given by $0.5 \sum_s X^s Z^t Z^m \rho Z^m Z^t X^s$. Thus, enough entanglement has been left in the resource state in order to perform a non-trivial task. 
\end{example}
Thus, the effect of a having access to a noisy resource state can be observed in the output of the computation, which crucially will depend on the noise channel that occurred. This thus motivates our study of one-dimensional MBQC within the context of a noisy resource state, and in the next section, we introduce a framework which will allow us to map the effect of local noise channels acting on the state onto the computation. 
\section{Noisy MBQC} \label{sec:noisyMBQC}
One-dimensional MBQC consists of adaptive single qubit measurements in either the computational basis or onto the equatorial plane of the Bloch sphere to be performed on a one-dimensional cluster state, as illustrated in Fig. \ref{fig:MBQCcirc}. In the following, our goal is to study the effect of local, single qubit noise channels acting on the physical state and determine their effect on the computation. In order to do so, we choose to break down MBQC into fundamental ``building blocks'', which can then be concatenated thus forming the complete computation. Then, the effect of noise channels on the individual building block is studied, and thus their effect on the output state is determined. Finally, we will argue that by concatenating such noisy blocks, we can model the effect of noise on the entire computation, thus understanding how noise affecting the physical state gets mapped to noise on the computation.   
\begin{figure}[h!]
	\centering 
	\mbox{
		\Qcircuit @C=1.5em @R=1.9em {
			\lstick{|+\rangle }	& \ctrl{1} 	 & \qw    	\\
			\lstick{|+\rangle }	& \ctrl{-1} &\ctrl{1} 	 & \qw \\
			\lstick{|+\rangle }	& \qw 		 & \ctrl{-1} & \ctrl{1} & \qw \\
			\lstick{|+\rangle }	& \qw  		 & \qw 		& \ctrl{-1} & \ctrl{1} & \qw  \\
			\lstick{|+\rangle }	& \qw  		 & \qw 		& \qw        & \ctrl{-1} & \qw 
		} }
		\caption[A One-dimensional cluster state.]{\emph{A one-dimensional cluster state.} Circuit of a one-dimensional $5$ qubit cluster state. Each qubit is initialised in the $|+\rangle$ state, and then entangled via a controlled-$Z$ gate with its neighbour.} 
		\label{fig:MBQCcirc}
	\end{figure}
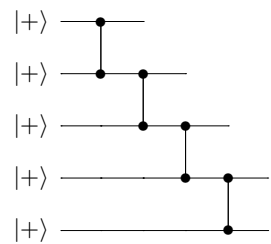
	\subsection{The Noiseless Model via The Building Block}
	
\emph{---The Building Block---} The fundamental building block of MBQC in one dimension is a two qubit state, where the first qubit is in an arbitrary mixed state $\rho$ and where the second qubit is a pure qubit initialised in the $|+\rangle$ state. These are then entangled via a controlled-$Z$ operator before a final measurement of either the observable $Z$ or  $R_z(\phi) X R_z(-\phi)$, denoted by $\phi$, is performed on the first qubit, yielding outcome $k$, see Fig. \ref{fig:equMeas}. This circuit results in the state $\rho$ being teleported onto the second qubit, with the added application of a unitary operator --- dependent on the measurement basis --- and a potential Pauli by-product operator due to the randomness of measurement outcomes.
	
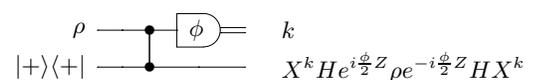
\begin{figure}[h!]
		\mbox{
			\Qcircuit @C=1em @R=.7em {
				\lstick{\rho}                           	& \qw    & \ctrl{1}   &  \measureD{\phi }      & \cw     & \rstick{k}   \\
				\lstick{|+\rangle \langle + |}     & \qw    & \ctrl{-1} &  \qw            & \qw  & \rstick{  X^k H e^{i \frac{\phi}{2}Z} \rho  e^{-i \frac{\phi}{2}Z} H X^k}
			} }
			\caption[Building Block]{\emph{The MBQC Building Block.}  The two qubits are entangled via a controlled-$Z$, and the first qubit is measured in the equatorial plane of the Bloch sphere yielding measurement outcome $k$.} 
			\label{fig:equMeas}
\end{figure}

\emph{---Block Notation---} Depending on whether the measurement was of the observable $Z$ or $R_z(\phi) X R_z(-\phi)$, the output is given respectively either by $\varepsilon_{z, k}(\rho)=  \frac{1}{2} Z^k \rho Z^k$ or $\varepsilon_{\phi, k} (\rho)=\frac{1}{2}  X^k H e^{i \frac{\phi}{2}Z} \rho  e^{-i \frac{\phi}{2}Z} H X^k$. We introduce the following shorthand notation: let  $\varepsilon_{i,k}(\rho)$ denote the output of the computation, where $i \in \{Z,  \phi \}$ refers to the observable measured and $k$ to the measurement outcome. This is illustrated in Fig. \ref{fig:genBlock}, with the effective implemented quantum channel in Fig.\ref{fig:channel}. 
\begin{figure}[h!]
				\centering 
				\mbox{
					\Qcircuit @C=1em @R=.7em {
						\lstick{\rho}	& \qw & \ctrl{1}  &  \measureD{i } & \cw   & \rstick{k}   \\
						\lstick{|+\rangle \langle + |}   	 & \qw    & \ctrl{-1} &   \qw  & \qw & \rstick{ \varepsilon_{i,k} (\rho ) }
					} }
					\caption[Building Block]{\emph{The MBQC Building Block.}  The two qubits are entangled, and the first qubit is measured in either the computational basis or in the equatorial plane of the Bloch sphere.} 
					\label{fig:genBlock}
\end{figure}
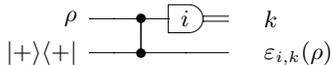
				
\begin{figure}[h!] 
					\centering %
					\mbox{
						\Qcircuit @C=1em @R=.7em {
							\lstick{\rho }	& \qw & \gate{\varepsilon_{i, k}}  & \qw  & \qw &\rstick{\varepsilon_{i,k} (\rho)}   
						} }
						\caption[Building Bloch Channel]{\emph{Channel Building Block} The building block effectively implements a quantum channel $\varepsilon_{i,k}$ on the input state $\rho$ where $i$ refers the measurement bases and $k$ the measurement outcome.} 
						\label{fig:channel}
\end{figure}
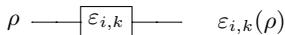		
\emph{---Effective computation---}The measurement thus implements a unitary operator on the state $\rho$ described by the superoperator $\varepsilon_{i,m}$. The output state can then be used as the input to a new block, a process which will be repeated until the computation terminates. If it consists of $L$ such blocks, then the final output state is given by $\rho'=\varepsilon_{i_L, k_L} \circ \ldots \varepsilon_{i_2, k_2} \circ \varepsilon_{i_1, k_1} (\rho )$, where $\circ$ denotes the composition of superoperators, i.e. $\alpha \circ \varepsilon (\rho)= \alpha (\varepsilon (\rho))$, and where $i_j \in \{z, \phi \}$, $k_j \in \{0,1\}$for $j= 1, \ldots, L$. We note that the measurement bases are still dependent on previous measurement outcomes, although we have for clarity chosen not to explicitly state this dependence. 
					
We have thus now decomposed one-dimensional MBQC as a sequence of building blocks acting on an input state $\rho$. Henceforth, we shall exclusively consider the building block and investigate how noise acting on the physical states is mapped to noise acting on the computation. But first, we consider a couple of simple examples in order to develop an intuition as to how noise propagates in this model. 
\subsection{General Local Noise}
We now introduce a framework which will allow for the study of arbitrary noise acting on single qubits, which provides a systematic method for mapping errors acting on the circuit to errors acting on the computation. In order to do so, we will consider how errors acting at various locations of the fundamental building block are transmitted.
									
Any quantum state $\rho$ can be decomposed in the computational basis as $\rho=\sum \alpha_{i_0 j_0} |i_0 \rangle \langle j_0|$, and is thus a linear combination of basis elements $|i_0 \rangle \langle j_0|$. The Choi matrix $C_\varepsilon$ of a CPTP map $\varepsilon$ stores the transformed basis states $\varepsilon (|i_0 \rangle \langle j_0|)$ of a quantum channel $\varepsilon$, in a similar way to matrices which store transformed basis vectors. By linearity, the Choi matrix thus provides us with a way of computing the output state of a quantum channel. Thus, we henceforth limit our study to that of inputs of the type $|i_0 \rangle \langle j_0|$. 
									
We consider a local single qubit noise model, represented by a CPTP map having a Kraus representation $\alpha (\rho ) = \sum_m K_m \rho K_m^\dag$. Each Kraus operator is a matrix of dimension two, which can thus be decomposed in the Pauli basis $K_m=\sum_{g,h} a_{gh}\sigma_{gh}$, where $\sigma_{gh}=i^{gh}X^gZ^h$. Each noise channel $\alpha$ may act at one or more of the four possible locations depicted in Fig.\ref{fig:arbNoise}.
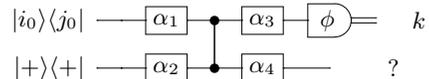
\begin{figure}[h!]
\centering 
\mbox{
\Qcircuit @C=1em @R=.7em {
\lstick{|i_0 \rangle \langle j_0|}	& \qw &\gate{\alpha_1}& \ctrl{1}  & \gate{\alpha_3} &\measureD{\phi} & \cw   & \rstick{k}   \\
\lstick{|+\rangle \langle + |}   	 & \qw   &\gate{\alpha_2} & \ctrl{-1}   &\gate{\alpha_4}& \qw & \rstick{?}
} }
\caption[General noise on the building block.]{\emph{General noise on the building block.} An input $|i_0 \rangle \langle j_0|$ is entangled with a $|+\rangle \langle+|$ before the first qubit is measured in the equatorial plane of the Bloch sphere. The noise channels $\alpha_i$ may occur at four possible locations, thus disrupting the computation.} 
\label{fig:arbNoise}
\end{figure}
Finally, we note that as $Z$ measurements have the effect of destroying the entanglement present between the measured qubit and its neighbours, we shall now exclusively focus on measurements performed onto the equatorial plane of the Bloch sphere, and write $\varepsilon_k$ instead of $\varepsilon_{\phi, k}$. 
										
First, we introduce two lemmas which treat the two trivial cases of a noise channel acting directly on the input or output states. If the noise acts on the input, then it is this mixed state which is teleported. 
\begin{figure}[h!] 
	\centering 
	\mbox{
		\Qcircuit @C=1em @R=.7em {
			\lstick{|i_0\rangle \langle j_0 |}	& \qw & \gate{\alpha_1} & \ctrl{1}  &  \measureD{\phi } & \cw   & \rstick{k}   \\
			\lstick{|+\rangle \langle + |}   	 & \qw  & \qw  & \ctrl{-1} &   \qw  & \rstick{ \varepsilon_k \circ \alpha_1 (|i_0 \rangle \langle j_0|) }
		} }
		\caption[Noise acting on the input state.]{\emph{Noise acting on the input state.}The noise channel $\alpha_1$ acts directly on the input $|i_0 \rangle \langle j_0|$, which is thus directly transmitted, resulting in the output $ \varepsilon_k \circ \alpha_1 (|i_0 \rangle \langle j_0|)$.} 
		\label{fig:alpha1}
	\end{figure}
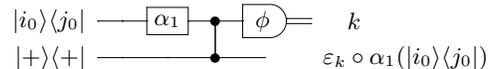
\begin{lemma} \label{lemma: noisy Input}
Let $\alpha_1$ be the single qubit noise channel acting at location $1$, as shown in Fig.\ref{fig:alpha1}. Then, the output is given by the channel 
\begin{equation}
\varepsilon_k \circ \alpha_1
\end{equation}. 
\end{lemma}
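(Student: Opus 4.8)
The plan is to exploit the fact that location $1$ sits \emph{before} the entangling controlled-$Z$, so the noise channel $\alpha_1$ acts on the input qubit in isolation and leaves the remainder of the building block — the controlled-$Z$ with the fresh $\ket{+}\bra{+}$ ancilla and the equatorial measurement with outcome $k$ — entirely untouched. Consequently the circuit of Fig.~\ref{fig:alpha1} is nothing but the noiseless building block of Fig.~\ref{fig:genBlock} fed with the operator $\alpha_1(\ket{i_0}\bra{j_0})$ in place of $\ket{i_0}\bra{j_0}$, and the statement will follow by composing two linear maps.

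Concretely, I would first recall the definition of the block superoperator: for an arbitrary single-qubit input $\rho$, the noiseless building block with equatorial measurement outcome $k$ outputs $\varepsilon_k(\rho)=\tfrac12 X^k H e^{i\frac{\phi}{2}Z}\rho\, e^{-i\frac{\phi}{2}Z} H X^k$, a linear (completely positive, trace non-increasing, because conditioned on $k$) map. Then I would substitute $\rho\mapsto\alpha_1(\ket{i_0}\bra{j_0})$; this is legitimate since $\alpha_1$ is CPTP and maps the basis element $\ket{i_0}\bra{j_0}$ to a well-defined operator, so composing gives $\varepsilon_k\bigl(\alpha_1(\ket{i_0}\bra{j_0})\bigr)=(\varepsilon_k\circ\alpha_1)(\ket{i_0}\bra{j_0})$. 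Finally, by linearity in the input together with the Choi-matrix remark of the preceding subsection, the identity extends from the basis elements $\ket{i_0}\bra{j_0}$ to an arbitrary mixed input $\rho$, which yields the claimed channel $\varepsilon_k\circ\alpha_1$.

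There is essentially no hard step here — this lemma is the trivial base case that anchors the more delicate analyses of $\alpha_2$, $\alpha_3$ and $\alpha_4$ to come. The only point deserving a line of care is verifying that at location $1$ no entanglement with the ancilla has yet been created, so that ``$\alpha_1$ on qubit $1$ of the two-qubit register'' and ``$\alpha_1$ on the standalone input'' genuinely coincide; this is immediate from the circuit ordering in Fig.~\ref{fig:alpha1}. One may additionally remark that normalisation is consistent: $\varepsilon_k$ and $\varepsilon_k\circ\alpha_1$ carry the same outcome-probability weight, since $\alpha_1$ is trace preserving and does not alter the measurement statistics.
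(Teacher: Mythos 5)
Your proof is correct and follows essentially the same route as the paper's: since $\alpha_1$ acts before the controlled-$Z$, the noiseless block is simply applied to the pre-noised input, giving $\varepsilon_k\circ\alpha_1$ by composition and linearity. Your extra remarks on the Choi-matrix extension and on the outcome probabilities (each $1/2$ regardless of input, since $\mathrm{Tr}\,\varepsilon_k(\rho)=\tfrac12\mathrm{Tr}\,\rho$) are consistent with, though not spelled out in, the paper's one-line argument.
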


\begin{proof}
This is equivalent to the noise channel $\alpha_1$ acting on the input state itself, and thus the teleportation protocol is directly applied to $\alpha_1(|i_0 \rangle \langle j_0|)$, thus resulting in the state $\varepsilon_k \circ \alpha_1 (|i_0 \rangle \langle j_0|)$ being transmitted. 
\end{proof}
Next, the following lemma formalises the trivial case when the noise channel acts on the output state. 
\begin{figure}[h!] 
	\centering 
	\mbox{
		\Qcircuit @C=1em @R=.7em {
			\lstick{|i_0\rangle \langle j_0 |}	& \qw   & \ctrl{1}   & \qw &  \measureD{\phi } & \cw   & \rstick{k}   \\
			\lstick{|+\rangle \langle + |}   	 & \qw   & \ctrl{-1} & \gate{\alpha_4} &  \qw  & \rstick{\alpha_4 \circ \varepsilon_k(|i_0 \rangle \langle j_0|) }
		} }
		\caption[Noise acting on the output state.]{\emph{Noise acting on the output state.} The noise channel $\alpha_4$ acts directly on the output thus resulting in the output $\alpha_4 \circ \varepsilon_k(|i_0 \rangle \langle j_0|)$.} 
		\label{fig:alpha4}
	\end{figure}
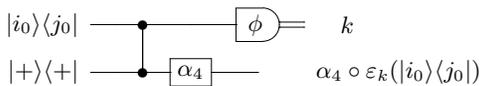

\begin{lemma} \label{lemma: noisy output}
Let $\alpha_4$ be the single qubit noise channel acting at location $4$, as shown in Fig.\ref{fig:alpha4}. Then, the output is given by the channel 
\begin{equation}
\alpha_4 \circ \varepsilon_k.
\end{equation}
\end{lemma}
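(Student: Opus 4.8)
The plan is to recognise that Lemma~\ref{lemma: noisy output} is the mirror image of Lemma~\ref{lemma: noisy Input}: whereas $\alpha_1$ acts before the entangling controlled-$Z$ and the measurement, and is therefore merely carried along by teleportation, the channel $\alpha_4$ acts \emph{after} the controlled-$Z$ and can be slid past the measurement of the first qubit, after which it simply acts on the fully formed output state. Concretely, I would first note that $\alpha_4$ is a channel supported on the second (register) qubit, while the measurement of the observable $\phi$ is supported on the first qubit; since these two operations act on disjoint tensor factors, they commute as superoperators on the joint system, and the classical outcome $k$ is untouched by $\alpha_4$. Hence the circuit of Fig.~\ref{fig:alpha4} is equivalent, as a conditional CPTP map from the input to the (outcome-$k$) output, to the circuit in which the measurement is performed first and $\alpha_4$ is applied afterwards.

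Second, I would identify the remaining part of the rearranged circuit — input $|i_0\rangle\langle j_0|$, controlled-$Z$ with the ancilla $|+\rangle\langle+|$, then measurement of $\phi$ on the first qubit with outcome $k$ — with exactly the noiseless building block of Fig.~\ref{fig:genBlock} in its equatorial-measurement incarnation, whose action on the input is by definition the superoperator $\varepsilon_k$. So this sub-circuit outputs $\varepsilon_k(|i_0\rangle\langle j_0|)$, and applying $\alpha_4$ to it gives $\alpha_4\bigl(\varepsilon_k(|i_0\rangle\langle j_0|)\bigr)=\alpha_4\circ\varepsilon_k(|i_0\rangle\langle j_0|)$. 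By linearity of all the maps involved — together with the Choi-matrix reasoning already invoked — this extends from the basis elements $|i_0\rangle\langle j_0|$ to an arbitrary input $\rho$, so the effective channel implemented is $\alpha_4\circ\varepsilon_k$, as claimed.

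There is essentially no hard step here; the only point needing any care is the commutation of $\alpha_4$ with the first-qubit measurement, and this is immediate because they act on different subsystems. If one wanted a fully explicit argument instead, one could write out the controlled-$Z$ applied to $|i_0\rangle\langle j_0|\otimes|+\rangle\langle+|$, project the first qubit onto the $\phi$-eigenstate labelled $k$, then apply $\alpha_4$ on the second qubit, and check term by term that the result coincides with $\alpha_4$ applied to $\varepsilon_k(|i_0\rangle\langle j_0|)$ — but the commutation observation makes this computation unnecessary.
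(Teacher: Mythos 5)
Your proof is correct and matches the paper's (implicit) reasoning: the paper treats this as the trivial case where the noise acts directly on the output, offering no separate proof, and your observation that $\alpha_4$ commutes with the first-qubit measurement because they act on disjoint subsystems is exactly the justification needed to reduce the circuit of Fig.~\ref{fig:alpha4} to the noiseless block followed by $\alpha_4$, giving $\alpha_4 \circ \varepsilon_k$.
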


Next, we generalise the previously introduced noisy cluster state and noisy measurement examples to arbitrary CPTP maps. First, when the noise channel is applied onto the second qubit---the resource---just before the states are entangled, this corresponds to a noisy cluster state. MBQC relies on the preparation of a pure cluster state which will serve as a reservoir of correlations to be used by the subsequent computation. As we previously saw, noisy resource states are not useless, and thus perhaps could lead to an interesting computation being performed. Here, we model this noisy cluster state as a local noise channel acting on the $|+\rangle$ input states. 
\begin{figure}[h!] 
	\centering 
	\mbox{
		\Qcircuit @C=1em @R=.7em {
			\lstick{|i_0\rangle \langle j_0 |}	& \qw & \qw & \ctrl{1}  &  \measureD{\phi } & \cw   & \rstick{m}   \\
			\lstick{|+\rangle \langle + |}   	 & \qw  & \gate{\alpha_2}  & \ctrl{-1} &   \qw  & \rstick{\tilde{\alpha_2} \circ  \varepsilon_k (|i_0 \rangle \langle j_0|) }
		} }
		\caption[Noise acting on the resource state.]{\emph{Noise acting on the resource state.} The noise channel $\alpha_2$ acts directly on the resource, thus resulting in the output $\tilde{\alpha_2} \circ  \varepsilon_k (|i_0 \rangle \langle j_0|)$.} 
		\label{fig:alpha2}														
	\end{figure}
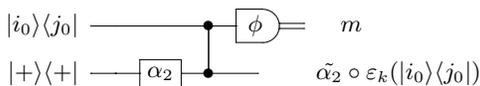
\begin{lemma} \label{lemma: noisy resource}
Let $\alpha_2$ be the single qubit noise channel acting at location $2$, as shown in Fig.\ref{fig:alpha2} and which is expressed in Kraus form $\alpha_2(\rho)=\sum_m K_m \rho K_m^\dag$. If the Kraus operators are decomposed in the Pauli basis $K_m=\sum_{gh}a_{gh}^{(m)} \sigma_{gh}$, then the output is given by: 
\begin{equation}
\tilde{\alpha_2} \circ \varepsilon_k, 
\end{equation}
 where  the modified channel $\tilde{\alpha}_2(\rho)=\sum_m \tilde{K}_m \rho \tilde{K}_m^\dag$ has Kraus operators given by:
\begin{equation} \label{equ: noise on cluster}
\tilde{K}_m=\sum_u \tilde{a}_{u0}^{(m)} Z^{(m)u}, 
\end{equation}
 with $\tilde{a}_{u0}= \sum_v  (-i)^{uv} a_{uv}^{(m)}$. 
\end{lemma}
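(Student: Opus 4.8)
The plan is to replace the noise channel $\alpha_2$, acting on the resource qubit prepared in $|+\rangle\langle+|$, by an \emph{equivalent} channel $\tilde\alpha_2$ whose Kraus operators are diagonal in the computational basis (linear combinations of $\mathbb{I}$ and $Z$), and then to use the fact that such operators commute both with the entangling controlled-$Z$ and with the measurement on the input qubit, so that they simply propagate to the output. Since $\alpha_2$ acts on the resource alone, the state entering the controlled-$Z$ is $\rho\otimes\alpha_2(|+\rangle\langle+|)$; all of the content of the lemma is in rewriting $\alpha_2(|+\rangle\langle+|)$ and then sliding the rewritten operators through the remainder of the block of Fig.~\ref{fig:alpha2}.

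First I would carry out the single-qubit computation. With $K_m=\sum_{g,h}a^{(m)}_{gh}\sigma_{gh}$ and $\sigma_{gh}=i^{gh}X^gZ^h$, the elementary identities $Z^h|+\rangle\in\{|+\rangle,|-\rangle\}$ according to $h$ and $X|-\rangle=-|-\rangle$ combine into $X^gZ^h|+\rangle=(-1)^{gh}Z^h|+\rangle$; together with $i^{gh}(-1)^{gh}=(-i)^{gh}$ this gives
\begin{equation}
K_m|+\rangle=\sum_{g,h}a^{(m)}_{gh}(-i)^{gh}\,Z^h|+\rangle=\Big(\sum_u\tilde a^{(m)}_{u0}\,Z^u\Big)|+\rangle=:\tilde K_m|+\rangle ,
\end{equation}
where the coefficients $\tilde a^{(m)}_{u0}$ are those of Eq.~\eqref{equ: noise on cluster}, obtained by collecting like powers of $Z$. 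Because $|+\rangle\langle+|$ is rank one, only the vector $K_m|+\rangle$ enters the sandwich, so $K_m|+\rangle\langle+|K_m^\dagger=\tilde K_m|+\rangle\langle+|\tilde K_m^\dagger$; summing over $m$ gives $\alpha_2(|+\rangle\langle+|)=\tilde\alpha_2(|+\rangle\langle+|)$ for the channel $\tilde\alpha_2(\cdot)=\sum_m\tilde K_m(\cdot)\tilde K_m^\dagger$.

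Next I would propagate $\tilde\alpha_2$ through the block. The state just after the noise is $\sum_m(\mathbb{I}\otimes\tilde K_m)(\rho\otimes|+\rangle\langle+|)(\mathbb{I}\otimes\tilde K_m^\dagger)$; since each $\tilde K_m$ is diagonal in the computational basis on the resource qubit it commutes with the controlled-$Z$, and since it acts on the resource qubit only it commutes with the projector onto the outcome $k$ of the equatorial measurement on the input qubit. Pulling the $\tilde K_m$ outside everything, the (unnormalised) post-measurement state of the output qubit is $\sum_m\tilde K_m\big[\varepsilon_k(\rho)\big]\tilde K_m^\dagger=\tilde\alpha_2\big(\varepsilon_k(\rho)\big)$, where $\varepsilon_k$ is precisely the noiseless building-block channel. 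Hence the realised channel is $\tilde\alpha_2\circ\varepsilon_k$, as claimed.

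I expect the only genuine difficulty to be bookkeeping: keeping track of how the phases $i^{gh}$ from the $\sigma_{gh}$ convention interact with the signs $(-1)^{gh}$ produced by acting on $|+\rangle$, so that the coefficients collapse to the stated $\tilde a^{(m)}_{u0}$. Two minor points I would state explicitly: the ``commutation through the measurement'' is really just locality, the measurement acting on the other tensor factor, not an operator identity on a single qubit; and $\tilde\alpha_2$ need not be trace-preserving even though $\alpha_2$ is, since the two channels agree only on the input $|+\rangle\langle+|$ --- which is consistent with a noisy resource being able to change the probability of the outcome $k$.
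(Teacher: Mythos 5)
Your proof is correct, and its first half is exactly the paper's: both arguments hinge on the single-qubit identity $K_m|+\rangle=\tilde K_m|+\rangle$ with $\tilde K_m$ a linear combination of $\mathbb{I}$ and $Z$, which, since $|+\rangle\langle+|$ is rank one, lets you trade $\alpha_2$ for $\tilde\alpha_2$ on the resource input. Where you diverge is in the propagation step: the paper pushes the modified Kraus operators through the block by brute force, writing out the state after the controlled-$Z$, projecting onto the equatorial outcome $|s_k(\phi)\rangle$, and tracking the phases $e^{\pm i\frac{\phi}{2}(-1)^{i_0}}$ and signs $(-1)^{(i_0+j_0)k}$ until $\varepsilon_k$ can be recognised in the residue, whereas you observe that each $\tilde K_m$ is diagonal on the resource qubit and hence commutes with the controlled-$Z$, and acts on the opposite tensor factor from the measurement, so it slides past both and lands as $\tilde\alpha_2$ applied after the noiseless block. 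Your route avoids the coefficient bookkeeping entirely and makes the structure of the result transparent; the paper's explicit computation has the mild advantage of exhibiting $\varepsilon_k$ concretely on basis inputs $|i_0\rangle\langle j_0|$, which it reuses in the proof of Theorem~\ref{theorem: noise block}. Your closing remark that $\tilde\alpha_2$ need not be trace-preserving (the two channels agree only on $|+\rangle\langle+|$) is a correct point the paper leaves implicit. One small caveat on the ``bookkeeping'': with the main-text convention $\sigma_{gh}=i^{gh}X^gZ^h$, collecting powers of $Z$ gives the coefficient of $Z^u$ as a sum over the \emph{first} index of $a^{(m)}$, whereas Eq.~\eqref{equ: noise on cluster} sums over the second; this is an index-convention inconsistency of the paper itself (its appendix and Table~\ref{tab:noisy cluster} implicitly use $\sigma_{gh}\propto Z^gX^h$), not a flaw in your argument, but you should fix one convention and state it when writing the final version.
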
	
\begin{proof}
The proof is given in the appendix \ref{app: noisy cluster}. The Kraus operators $K_m$ are decomposed in the Pauli basis, and their effect on the resource state $|+\rangle$ is studied, thus giving the output. 
\end{proof}												
The results from lemma \ref{lemma: noisy resource} are summarised in table \ref{tab:noisy cluster}, which illustrates how each Pauli basis element is mapped to a Pauli element in order to form the final noise channel acting on the output. 
\begin{table}[h]
	\centering 
	\begin{tabular}{c c c c c } 
		\hline\hline 
				Coefficient 		  & $a_{00}$	   & $a_{01}$ & $a_{10}$ & $a_{11}$  \\[1ex]

		\hline 
		Initial operator 	& $\mathbb{I}$ & $X$ & $Z$ & $-iZX$ \\ [0.5ex]
		Final operator		  & $\mathbb{I}$	   & $\mathbb{I}$ & $Z$ & $-iZ$  \\[1ex]
		\hline 
	\end{tabular}
	\caption[Mapping of Pauli basis elements for a noisy cluster.]{\emph{Mapping of Pauli basis elements for a noisy cluster.} The initial channel $\alpha_2$ is given by Kraus operators which can be decomposed in the Pauli bases $\sigma_{gh}$ with coefficient $a_{gh}$. The final channel $\tilde{\alpha}_2$ is given in terms of Pauli operators $\mathbb{I}$ and $Z$, with coefficients determined by the mapping.} 
	\label{tab:noisy cluster}
\end{table}				

Finally, we consider the case of a noisy measurement, which is modelled by applying a noise channel to the first qubit just before the measurement is performed. Here, we can exploit the fact that the measurement basis is known. Indeed, a unitary operator $U$ followed by a projective measurement in the basis $|v_k \rangle$ can alternatively be viewed as a measurement onto the state $U^\dag |v_k \rangle$. Here, the measurements are onto $e^{-i \frac{\phi}{2}Z}Z^k|+\rangle$, and we thus choose to decompose the Kraus operators in the rotated Pauli basis $U \sigma_{gh}U^\dag$, with $U=e^{-i \frac{\phi}{2}Z}$. Hence, each Kraus operator can now be expressed as $K_m= \sum a_{gh} U \sigma_{gh} U^\dag$, where $a_{gh}= \text{Tr}(K_m U\sigma_{gh}U^\dag)$, and where we now define $\sigma_{gh} =(-i)^{gh} Z^g X^h$. Thus, we can study the effect of a noisy measurement and find that: 
\begin{figure}[h!] 
	\centering 
	\mbox{
		\Qcircuit @C=1em @R=.7em {
			\lstick{|i_0\rangle \langle j_0 |}	& \qw  & \ctrl{1}   &  \gate{\alpha_3 } & \measureD{\phi } & \cw   & \rstick{k}   \\
			\lstick{|+\rangle \langle + |}   	 & \qw  & \ctrl{-1} &   \qw 	& \qw 		  & \rstick{ \tilde{\alpha}_{3, k} \circ \varepsilon(|i_0 \rangle \langle j_0|) }
		} }
		\caption[Noisy Measurement.]{\emph{Noisy Measurement.} The noise channel $\alpha_3$ acts on the first qubit just before the measurement in the equatorial plane. This results in the output given by $\tilde{\alpha}_{3, k} \circ \varepsilon(|i_0 \rangle \langle j_0|)$, where the noise now also depends on the measurement outcome $k$.}
		\label{fig:noisy measurement}
\end{figure}
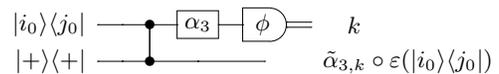

\begin{lemma} \label{lemma: noisy meas}
Let $\alpha_3$ be the single qubit noise channel acting at location $3$, as shown in Fig.\ref{fig:noisy measurement} and which is expressed in Kraus form $\alpha_3(\rho)=\sum_m K_m \rho K_m^\dag$. If the Kraus operators are decomposed in the Pauli basis $K_m=\sum_{gh} a_{gh}^{(m)}e^{-i\frac{\phi}{2} Z} \sigma_{gh} e^{i\frac{\phi}{2} Z}$, then the output is given by: 
\begin{equation}
\varepsilon_k  \circ \tilde{\alpha}_{3,k}
\end{equation}
 where the channel $\tilde{\alpha_{3}}^k$ has Kraus operators defined as: 
\begin{equation}\label{equ: noise on meas}
\tilde{K}_{m}=\sum_{u} \tilde{a}_{uk}^{(m)}   Z^u, 
\end{equation}
with coefficients $\tilde{a}_{uk}^{(m)}= \sum_{v} (i)^{uv}(-1)^{kv}a_{uv}^{(m)}$ and $k$ denoting the measurement outcome. 
\end{lemma}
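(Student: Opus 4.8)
The plan is to exploit the fact that, unlike a noisy resource state, a noisy \emph{measurement} sits next to a \emph{known} measurement basis, so the awkward (non-commuting) part of the noise can be absorbed into the measurement rather than being commuted through the controlled-$Z$. Recall that an equatorial measurement of $R_z(\phi)XR_z(-\phi)$ with outcome $k$ is the projection onto $\ket{v_k}=e^{-i\frac{\phi}{2}Z}Z^k\ket{+}$, so that the noiseless block of Fig.~\ref{fig:genBlock} acts as $\varepsilon_k(\rho)=\bra{v_k}_1\,CZ\,(\rho\otimes\ket{+}\bra{+})\,CZ\,\ket{v_k}_1$ (the factor $\tfrac12$ emerges from the overlap). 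Consequently the output of the noisy block in Fig.~\ref{fig:noisy measurement} is $\sum_m \bra{v_k}_1 K_m^{(1)}\,CZ\,(\ket{i_0}\bra{j_0}\otimes\ket{+}\bra{+})\,CZ\,(K_m^{(1)})^\dagger\ket{v_k}_1$, and everything reduces to rewriting $\bra{v_k}K_m$ in a convenient form.

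First I would compute $\bra{v_k}K_m$ term by term. From $\bra{v_k}=\bra{+}Z^k e^{i\frac{\phi}{2}Z}$ one gets $\bra{v_k}e^{-i\frac{\phi}{2}Z}=\bra{+}Z^k$, so with $\sigma_{gh}=(-i)^{gh}Z^gX^h$ a generic term is $\bra{+}Z^k\sigma_{gh}e^{i\frac{\phi}{2}Z}$. Moving $X^h$ to the left past $Z^{k+g}$ produces a sign $(-1)^{(k+g)h}$, then $\bra{+}X^h=\bra{+}$ removes it, and finally $\bra{+}Z^{k+g}e^{i\frac{\phi}{2}Z}=\bra{v_k}Z^g$ because $Z^g$ commutes with $e^{i\frac{\phi}{2}Z}$. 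Combining the phases, $(-i)^{gh}(-1)^{(k+g)h}=(i)^{gh}(-1)^{kh}$, so $\bra{v_k}K_m=\bra{v_k}\tilde K_m$ with $\tilde K_m=\sum_u\tilde a_{uk}^{(m)}Z^u$ and $\tilde a_{uk}^{(m)}=\sum_v (i)^{uv}(-1)^{kv}a_{uv}^{(m)}$, precisely the claimed operators. This phase bookkeeping --- keeping straight the rotated-Pauli convention $\sigma_{gh}=(-i)^{gh}Z^gX^h$, the sign from passing $X^h$ through $Z^{k+g}$, and the re-absorption of the leftover $\bra{+}Z^{k+g}e^{i\frac{\phi}{2}Z}$ back into $\bra{v_k}$ --- is the only real obstacle in the argument.

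The remaining step is immediate: $\tilde K_m$ is a linear combination of $\mathbb{I}$ and $Z$ on the first qubit, and every such operator commutes with the controlled-$Z$. Hence $\bra{v_k}_1\tilde K_m^{(1)}\,CZ=\bra{v_k}_1\,CZ\,\tilde K_m^{(1)}$, and symmetrically on the ket side, so the output collapses to $\bra{v_k}_1\,CZ\,\big(\sum_m\tilde K_m\ket{i_0}\bra{j_0}\tilde K_m^\dagger\otimes\ket{+}\bra{+}\big)\,CZ\,\ket{v_k}_1=\varepsilon_k\big(\tilde{\alpha}_{3,k}(\ket{i_0}\bra{j_0})\big)$, i.e. the effective channel is $\varepsilon_k\circ\tilde{\alpha}_{3,k}$. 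By the Choi/linearity argument already invoked in the text, restricting to inputs of the form $\ket{i_0}\bra{j_0}$ is without loss of generality, which completes the proof. It is worth noting that, in contrast with Lemma~\ref{lemma: noisy resource}, no $Z$ by-product is generated on the second qubit --- precisely because the $X$-content of the noise was swallowed by $\bra{+}$ --- and this is also the structural reason the modified channel $\tilde{\alpha}_{3,k}$ acquires a dependence on the outcome $k$. (If one wants $\tilde{\alpha}_{3,k}$ to be a genuine CPTP map it should be renormalised; as stated it is the unnormalised branch whose trace records the probability of obtaining outcome $k$.)
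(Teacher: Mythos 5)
Your proof is correct and follows essentially the same route as the paper's: both decompose the Kraus operators in the rotated Pauli basis and absorb them into the measurement projector via $\bra{v_k}K_m=\bra{v_k}\tilde K_m$ (the paper computes the adjoint statement $K_m^\dag\ket{s_k(\phi)}=\tilde K_m^\dag\ket{s_k(\phi)}$), arriving at the same diagonal operators $\tilde K_m=\sum_u\tilde a_{uk}^{(m)}Z^u$ with the same phase bookkeeping. The only difference is cosmetic: you finish by commuting the diagonal $\tilde K_m$ through the controlled-$Z$ onto the input, whereas the paper expands the matrix elements for inputs $\ket{i_0}\bra{j_0}$ explicitly and regroups the coefficients to exhibit $\varepsilon_k\circ\tilde\alpha_{3,k}$; your normalisation remark is also consistent with the paper's convention, where the factor $\tfrac12$ sits inside $\varepsilon_k$.
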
			

\begin{proof}
The proof relies on decomposing each of the Kraus operators in the rotated Pauli basis and studying its effect on the computation, given in appendix \ref{app: rotated}. 
\end{proof}
This mapping is summarised in table \ref{tab:noisy meas}, where each Pauli basis element is mapped to a Pauli element, with certain coefficients now depending on measurement outcome $k$. 
\begin{table}[h]
	\centering 
	\begin{tabular}{c c c c c } 
		\hline\hline 
		Coefficient 		  & $a_{00}$	   & $a_{01}$ & $a_{10}$ & $a_{11}$  \\[1ex]
		
		\hline 
		Initial  	& $\mathbb{I}$ & $e^{-i\frac{\phi}{2} Z} Ze^{i\frac{\phi}{2} Z} $ & $e^{-i\frac{\phi}{2} Z} Xe^{i\frac{\phi}{2} Z} $ & $ie^{-i\frac{\phi}{2} Z} XZe^{i\frac{\phi}{2} Z} $ \\ [0.5ex]
		Final 		  & $\mathbb{I}$	   & $Z$ & $(-1)^k \mathbb{I}$ & $i(-1)^k Z$  \\[1ex]
		\hline 
	\end{tabular}
	\caption[Mapping of Pauli basis elements for a noisy measurement.]{\emph{Mapping of Pauli basis elements for a noisy measurement.} The initial channel $\alpha_3$ is given by Kraus operators which can be decomposed in the rotated Pauli bases $e^{-i\frac{\phi}{2} Z}\sigma_{gh}e^{i\frac{\phi}{2} Z}$ with coefficient $a_{gh}$. The final channel $\tilde{\alpha}_3$ is given in terms of Pauli operators $\mathbb{I}$ and $Z$, with coefficients determined by the mapping.} 
	\label{tab:noisy meas}
\end{table}				
														
We have thus studied the mapping of noise onto computation, when a local noise channel acts in one of the four possible locations of the MBQC building block. Finally, we consider the case whereby noise acts simultaneously at all four locations of the building block. By combing the four previous lemmas, we obtain the following theorem, mapping noise on the state to noise on the computation:
\begin{theorem} \label{theorem: noise block}
Let $\alpha_i$, for $i=1,2,3,4$ be noise channels acting at all four possible locations of the building block, as depicted in Fig. \ref{fig:arbNoise}. Each channel is a single qubit CPTP map which may be expressed as a Kraus decomposition. Then, upon input $|i_0\rangle \langle j_0|$, the output is given by the channel: 
\begin{equation}
\alpha_4 \circ \tilde{\alpha}_2 \circ \varepsilon_k \circ \tilde{\alpha}_{3,k} \circ \alpha_1,  
\end{equation}
where the new channels $\tilde{\alpha}_2$ and $\tilde{\alpha}_{3,k}$ are defined as follows and where $\varepsilon_k(\rho)=\frac{1}{2}  X^k H e^{i \frac{\phi}{2}Z} \rho  e^{-i \frac{\phi}{2}Z} H X^k$.

If the original channel $\alpha_3$ can be expressed as a Kraus decomposition $\alpha_3(\rho)=\sum_m K_m \rho K_m^\dag$, and each Kraus operator decomposed in the rotated Pauli basis $K_m=\sum_{gh} a_{gh}^{(m)}e^{-i\frac{\phi}{2} Z} \sigma_{gh} e^{i\frac{\phi}{2} Z}$, then the channel $\tilde{\alpha_{3}}^k$ will have Kraus operators given by: 
\begin{equation}\label{equ: noise on meas}
\tilde{K}_{m}=\sum_{u} \tilde{a}_{uk}^{(m)}   Z^u, 
\end{equation}
with coefficients $\tilde{a}_{uk}^{(m)}= \sum_{v} (i)^{uv}(-1)^{kv}a_{uv}^{(m)}$ and $k$ denoting the measurement outcome. If the original channel $\alpha_2$ is expressed as a Kraus decomposition $\alpha_2(\rho)=\sum_m K_m \rho K_m^\dag$, then the channel $\tilde{\alpha}_2(\rho)$ will have Kraus operators given by:
\begin{equation} \label{equ: noise on cluster}
\tilde{K}_m=\sum_u \tilde{b}_{u0}^{(m)} Z^{(m)u}, 
\end{equation}
and with coefficients $\tilde{b}_{u0}= \sum_v  (-i)^{uv} b_{uv}^{(m)}$. 
\end{theorem}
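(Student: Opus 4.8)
\emph{Proof idea.} The plan is to obtain the theorem by composing Lemmas~\ref{lemma: noisy Input}, \ref{lemma: noisy output}, \ref{lemma: noisy resource} and \ref{lemma: noisy meas}, since the four positions $\alpha_1,\dots,\alpha_4$ of Fig.~\ref{fig:arbNoise} are exactly the configurations handled one at a time by those lemmas. Two preliminary reductions make this legitimate. First, by linearity of every channel involved, together with the Choi-matrix observation above, it suffices to take the input to be a fixed basis element $|i_0\rangle\langle j_0|$. Second, since each $\alpha_i$ is given by a Kraus decomposition whose operators are themselves expanded in a (rotated) Pauli basis, it is enough to push a single Pauli component of each Kraus operator through the block and then resum against the coefficients $a^{(m)}_{gh}$ and $b^{(m)}_{gh}$; this resummation is precisely what yields the coefficient formulas for $\tilde a^{(m)}_{uk}$ and $\tilde b^{(m)}_{u0}$ in the statement, which are identical to those of Lemmas~\ref{lemma: noisy resource} and \ref{lemma: noisy meas}.

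Next I would strip the channels from the two ends of the block inwards. The map $\alpha_1$ acts on the input wire before anything else happens, so by Lemma~\ref{lemma: noisy Input} it contributes the innermost (first-applied) factor $\alpha_1$. The map $\alpha_4$ acts on the second wire after the controlled-$Z$ and on a qubit that is never measured, hence commutes with the measurement of the first qubit and, by Lemma~\ref{lemma: noisy output}, contributes the outermost (last-applied) factor $\alpha_4$. What remains is the ``core'' block carrying $\alpha_2$ on the resource qubit before the entangling gate and $\alpha_3$ on the first qubit just before the measurement.

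The substance of the proof is then to run the reductions of Lemmas~\ref{lemma: noisy resource} and \ref{lemma: noisy meas} \emph{simultaneously} and to verify that they do not interfere. For $\alpha_2$: expanding each Kraus operator in the Pauli basis and using $X|+\rangle=|+\rangle$ kills the $X$-type parts, leaving only powers of $Z$ on the second wire, i.e.\ $\tilde\alpha_2$ with $\tilde K_m=\sum_u \tilde b^{(m)}_{u0}Z^u$; since $Z$ on a wire commutes with a controlled-$Z$ on that wire and trivially with $\alpha_3$ (on the other wire), these $Z$'s may be carried past the entangling gate and emerge acting on the output, so $\tilde\alpha_2$ appears \emph{after} $\varepsilon_k$. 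For $\alpha_3$: absorbing the fixed rotation $e^{-i\frac{\phi}{2}Z}$ into the measurement basis and expanding in the correspondingly rotated Pauli basis, one uses $X|+\rangle=|+\rangle$ together with the fact that a $Z$ on the first qubit just before an equatorial-basis measurement is absorbed into $\varepsilon_k$ as a relabelling of the outcome, to collapse the qubit-$1$ action to the $k$-dependent channel $\tilde\alpha_{3,k}$ with $\tilde K_m=\sum_u \tilde a^{(m)}_{uk}Z^u$, which through the controlled-$Z$ correlation ends up acting \emph{before} $\varepsilon_k$. Crucially this second computation involves only the first wire and the structure of the equatorial measurement, so it is insensitive to which state prepared the second wire and hence unaffected by $\alpha_2$; conversely the $\alpha_2$-reduction touches only the second wire before the entangling gate and is unaffected by $\alpha_3$. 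Reassembling the four pieces in the order forced by the circuit---$\alpha_1$ first, then $\tilde\alpha_{3,k}$, then $\varepsilon_k$, then $\tilde\alpha_2$, then $\alpha_4$---gives $\alpha_4\circ\tilde\alpha_2\circ\varepsilon_k\circ\tilde\alpha_{3,k}\circ\alpha_1$.

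I expect the main obstacle to be exactly this commutation and outcome-bookkeeping: one must be sure that the $Z$-operators produced by the $\alpha_2$-reduction are genuinely inert with respect to everything downstream of the entangling gate so that they land cleanly in the output slot, and that the outcome $k$ appearing in $\tilde\alpha_{3,k}$ is the same $k$ that labels $\varepsilon_k$ and is not further shifted by $\alpha_2$ or $\alpha_4$. Once these commutations are checked, the theorem is just the concatenation of the four lemmas.
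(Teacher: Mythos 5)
Your proposal is correct and follows essentially the same route as the paper: the theorem is obtained by combining Lemmas~\ref{lemma: noisy Input}, \ref{lemma: noisy output}, \ref{lemma: noisy resource} and \ref{lemma: noisy meas}, with the only substantive work being the simultaneous treatment of $\alpha_2$ and $\alpha_3$ in the core block. The paper's appendix verifies your ``non-interference'' claim by an explicit computation of the two-noise circuit (pushing the $Z$-type remnants of $\tilde{\alpha}_2$ and $\tilde{\alpha}_{3,k}$ through the controlled-$Z$ exactly as you describe), so your commutation argument is just a more compactly stated version of the same proof.
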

\begin{proof}
The theorem can be proven by combining lemmas \ref{lemma: noisy Input}, \ref{lemma: noisy output}, \ref{lemma: noisy resource} and \ref{lemma: noisy meas}, and can be found in appendix \ref{app: theorem}.
\end{proof}
\subsection{Examples}
Thus, we next revisit the previous examples of the bit-flip and phase-flip channels acting on either the cluster state or the measurement, before considering a more general noise model. First, we consider a noisy cluster state, which may easily be studied using lemma \ref{lemma: noisy resource} and table \ref{tab:noisy cluster}. 
\begin{example}
When the bit-flip channel is applied, the Kraus operators $K_s = \sqrt{p_s} X^s$, $s=0, 1$ are mapped to $\tilde{K}_s=\sqrt{p_s} \mathbb{I}$, thus resulting in the identity channel: the output state $\varepsilon_k (\rho)$ has been unaffected by the noise channel.
\end{example}
\begin{example}
When the phase-flip channel is applied, the Kraus operators $K_s = \sqrt{p_s} Z^s$, $s=0, 1$ remain unchanged under the mapping $\tilde{K}_s=\sqrt{p_s} Z^s$, and thus the phase-flip channel has been applied to the output $\sum_s p_s Z^s\varepsilon (\rho)Z^s$.
\end{example}

Next, we consider the case of a noisy measurement in the $X$ basis, which may easily be studied using lemma \ref{lemma: noisy meas} and table \ref{tab:noisy meas}, which reduces to table \ref{tab:X meas} for the case $\phi=0$. Once again, we consider the effect of the bit-and phase-flip channels on the output. 
\begin{table}[h]
	\centering 
	\begin{tabular}{c c c c c } 
		\hline\hline 
		Coefficient 		  & $a_{00}$	   & $a_{01}$ & $a_{10}$ & $a_{11}$  \\[1ex]
		
		\hline 
		Initial  	& $\mathbb{I}$ & $Z $ & $ X$ & $iXZ$ \\ [0.5ex]
		Final 		  & $\mathbb{I}$	   & $Z$ & $(-1)^k \mathbb{I}$ & $i(-1)^k Z$  \\[1ex]
		\hline 
	\end{tabular}
	\caption[Mapping of Pauli basis elements for a noisy $X$ measurement.]{\emph{Mapping of Pauli basis elements for a noisy $X$ measurement.} Here, we consider the special case of lemma \ref{lemma: noisy meas}, where the measurement is in the $X$ basis and thus $\phi=0$. The initial channel $\alpha_3$ is given by Kraus operators which can be decomposed in the Pauli bases $\sigma_{gh}$ with coefficient $a_{gh}$. The final channel $\tilde{\alpha}_3$ is given in terms of Pauli operators $\mathbb{I}$ and $Z$, with coefficients determined by the mapping.} 
	\label{tab:X meas}
\end{table}		
\begin{example}
When the bit-flip channel is applied, the Kraus operators $K_s = \sqrt{p_s} X^s$, $s=0, 1$ are mapped to $\tilde{K}_s=\sqrt{p_s} (-1)^{ks} \mathbb{I}$, thus resulting in the identity channel: the output state $\varepsilon_k (\rho)$ has been unaffected by the noise.

\end{example}
\begin{example}
When the phase-flip channel is applied, the Kraus operators $K_s = \sqrt{p_s} Z^s$ are mapped to $\tilde{K}_s=\sqrt{p_s} Z^s$, thus resulting in the phase-flip channel acting on the output $\sum_s p_s Z^s\varepsilon_k (\rho)Z^s$.
\end{example}			
					
Next, having considered Pauli noise, we look at the more general case of unitary noise, whereby a unitary operator is applied to a physical qubit with probability $p$, thus disrupting and potentially destroying the computation. In the following, we shall study the effect of a Hadamard gate $H=\frac{1}{\sqrt{2}}(X+Z)$ on both the cluster and the measurement. 
\begin{table}[h]
	\centering 
	\begin{tabular}{c c c c c } 
		\hline\hline 
		Coefficient 		  & $0$	   & $\sqrt{\frac{p}{2}}$ & $\sqrt{\frac{p}{2}}$  & $0$  \\[1ex]
		
		\hline 
		Initial operator 	& $\mathbb{I}$ & $X$ & $Z$ & $-iZX$ \\ [0.5ex]
		Final operator		  & $\mathbb{I}$	   & $\mathbb{I}$ & $Z$ & $-iZ$  \\[1ex]
		\hline 
	\end{tabular}
	\caption[A Hadamard acting on the cluster. ]{\emph{A Hadamard acting on the cluster.} The initial channel $\alpha_2$ is given by Kraus operators which can be decomposed in the Pauli bases $\sigma_{gh}$ with coefficient $a_{gh}$. The final channel $\tilde{\alpha}_3$ is given in terms of Pauli operators $\mathbb{I}$ and $Z$, with coefficients determined by the mapping.} 
	\label{tab:noisy cluster}
\end{table}	
\begin{example} \label{ex: H cluster}
When a Hadamard gate is applied with probability $p_1$, the Kraus operators of the corresponding channel are $K_s=\sqrt{p_s} H^s$, $s=0, 1$. From table \ref{tab:noisy cluster} which are mapped to $\tilde{K}_0=K_0$ and $\tilde{K}_1=\sqrt{2p_1} |0\rangle \langle 0|$ thereby destroying the computation.

\end{example}

\begin{table}[h]
	\centering 
	\begin{tabular}{c c c c c } 
		\hline\hline 
		Coefficient 		  & $0$	   & $\frac{\sqrt{p}}{\sqrt{2}}$ & $\frac{\sqrt{p}}{\sqrt{2}}$ & $0$  \\[1ex]
		
		\hline 
		Initial  	& $\mathbb{I}$ & $ Z $ & $ X $ & $i ZX  $ \\ [0.5ex]
		Final 		  & $\mathbb{I}$	   & $Z$ & $(-1)^k \mathbb{I}$ & $i(-1)^k Z$  \\[1ex]
		\hline 
	\end{tabular}
	\caption[A Hadamard before the measurement.]{\emph{A Hadamard before the measurement.} The initial channel $\alpha_3$ is given by Kraus operators which can be decomposed in the rotated Pauli bases $\sigma_{gh}$ with coefficient $a_{gh}$. The final channel $\tilde{\alpha}_3$ is given in terms of Pauli operators $\mathbb{I}$ and $Z$, with coefficients determined by the mapping.} 
	\label{tab:noisy measother}
\end{table}	
\begin{example} \label{ex: H meas}
Here, with probability $p$ a Hadamard gate is applied to the second qubit before the control-$Z$ and a measurement in the $X$ basis is performed. This corresponds to Kraus operators $K_1=\sqrt{1-p} \mathbb{I}$ and $K_2=\sqrt{p} H$. From table \ref{tab:noisy meas}, the new modified Kraus operators can be immediately computed, and we find that: $\tilde{K}_1=K_1$ and $\tilde{K}_2=\sqrt{2p}|k\rangle \langle k|$
\end{example}										
Thus, we have been able to use the framework derived in order to map physical errors acting on qubits to logical errors acting throughout the computation. Next, we consider the framework of MPS which is a natural framework for the study of one-dimensional MBQC. Here, the effect of measurements on the physical state can easily be observed as logical operators acting on the correlation space. This provides us with motivation to investigate how a mixed state generalisation of MPS could be used in order to study noisy MBQC, and understand the mapping between the noise channels acting on the state and the information  processed. 
\section{Matrix Product Operators} \label{sec: MPO}	
Matrix Product Operators (MPOs) are an extension of the formalism of MPS to mixed states. In order to develop an intuition as to what this might look like, we first consider the example of a one-dimensional cluster $|\psi \rangle$ expressed as an MPS:
\begin{equation}
|\psi \rangle = \sum_i \langle i_n |A[i_{n-1}] \ldots A[i_1] |+\rangle |i_1 \ldots i_n \rangle, 
\end{equation}
where $A[k]=H|k\rangle \langle k|$, $k=0,1$ are logical operators acting within the correlation space. When a physical qubit is measured, its associated logical operator is updated and mapped to a new logical operator dependent on the measurement basis and outcome. Thus, as measurements are implemented, the evolution of the computation can be observed within the correlation space, with operators proportional to unitary operators acting on the logical $|+\rangle$ state. Hence, this is has been a natural framework for the study of one-dimensional MBQC. 

In order to generalise the MPS framework to mixed states, we first consider the rank one density operator of a pure cluster state: 
\begin{align}
|\psi \rangle  \langle \psi |=& \sum_i \sum_j \langle i_n |A[i_{n-1}] \ldots A[i_1] |+ \rangle \langle + | A[j_1] \notag \\
& \ldots A[j_{n-1}] |j_n \rangle |i_1 \ldots  i_n \rangle \langle j_1 \dots j_n|.
\end{align}
Previously, we thought of the logical operators $A[i]$ as acting on the logical $|+\rangle$ state, in analogy with unitary operators acting upon quantum states. Here, we wish to consider mixed state evolution, which is typically represented by a superoperator acting on a density matrix. We thus propose to now interpret the MPO representation as that of a map acting upon an input $|+\rangle \langle +|$, and we thus write: 
\begin{align}
|\psi \rangle  \langle \psi |=& \sum_i \sum_j  \sigma^{[i_n, j_n]} \circ \varepsilon^{[i_n, j_n]} \ldots \varepsilon^{[i_1, j_1]} (|+\rangle \langle+|) \notag \\
& |i_1 \ldots i_n \rangle \langle j_1 \ldots  j_n|,
\end{align}
where we have introduced the logical superoperator $\varepsilon^{[i_k, j_k]}$ acting on qubit $k$, which is defined as:
\begin{equation}
\varepsilon^{[i_k, j_k]}(\rho)=A[i_k] \rho  A[j_k], 
\end{equation}
an expression similar to the Kraus decomposition of a CPTP map, and where the boundary conditions are given by: 
\begin{equation}
\sigma^{[i_n, j_n]} (\rho)=\langle i_n |\rho |j_n \rangle. 
\end{equation}

This intuition is formalised in lemma \ref{lemma:MPO}, where an expression for an MPO is given in terms of logical superoperators acting on an input within the correlation space. To our knowledge, similar MPO representations were first studied in \cite{pirvu2010matrix}. The key idea is that the initial mixed state is first purified before then, successive Schmidt decompositions are applied to the state and finally, the auxiliary qubits are traced out. Thus, the coefficients of the final state will be expressed as the composition of logical superoperator acting on a rank one density matrix. 
\begin{lemma}\label{lemma:MPO}
	An $n$-qubit mixed state $\rho$ can be expressed as an MPO: 
	\begin{align}
	\rho=& \sum_{i, j}\sigma^{[i_n,j_n]} \circ \varepsilon^{[i_{n-1}, j_{n-1}]}\circ \dots  \circ \varepsilon^{[i_{2}, j_{2}]}(\rho[i_1, j_1])  \notag \\
	&  |i_1 \ldots i_n \rangle \langle j_1 \ldots j_n |, 
	\end{align}
	where $\varepsilon^{[i_{k}, j_{k}]}(\rho)=\sum_s A[i_k, s] \rho A^\dag[j_k, s]$ denote the logical superoperator associated with qubit $k$ and where the boundary conditions are given by $\sigma^{[i_n,j_n]} (\rho)= \sum_s v^\dag[i, s] \rho v[j, s]$. $A[i_k, s]$ correspond to matrices of dimension $\xi$ and the $v[j_k, s]$ to vectors, both of which depend on the state of the physical qubit $k$ and an index $s$. 
\end{lemma}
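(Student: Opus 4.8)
The plan is to reduce the statement to two standard facts: every mixed state has a purification, and every pure state on a one-dimensional chain has an exact MPS representation obtained by iterated Schmidt decompositions. The MPO structure then emerges by tracing the purifying systems back out.

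First I would purify $\rho$. Attach to each physical qubit $k$ an ancilla qubit $k'$ and write $\rho = \text{Tr}_{1'\ldots n'}\,\ket{\Psi}\bra{\Psi}$ for some pure state $\ket{\Psi}$ on the $2n$ sites, arranged along a line in the order $1,1',2,2',\ldots,n,n'$. Such a $\ket{\Psi}$ always exists (for instance a minimal purification, using $n$ ancilla qubits, or fewer if the reduced states have low rank). Next I would apply the standard iterated-Schmidt-decomposition construction of \cite{vidal2003efficient, perez2006matrix} to $\ket{\Psi}$, cutting the chain site by site from one end. This produces an exact MPS for $\ket{\Psi}$ in which the two sites $k,k'$ associated with physical qubit $k$ contribute a tensor carrying a physical index $i_k$, an ancilla index $s_k$, and two bond (correlation-space) indices; collecting the $i_k$- and $s_k$-dependence gives matrices $A[i_k,s_k]$ of some finite bond dimension $\xi$ — bounded by the largest Schmidt rank of $\ket{\Psi}$ across the cuts — together with a left boundary vector $\ket{L}$ absorbed at one end and boundary vectors $v[i_n,s_n]$ at the other.

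Then I would form $\ket{\Psi}\bra{\Psi}$ and trace out the ancillas. Tracing ancilla $k'$ identifies its bra and ket indices and sums over $s_k$, and this sum couples the ket-chain matrix $A[i_k,s_k]$ with the bra-chain matrix $A[j_k,s_k]^\dagger$. Invoking the vectorization isomorphism — an operator on the doubled correlation space $\mathbb{C}^\xi\otimes\mathbb{C}^\xi$ is the same object as a superoperator on $\xi\times\xi$ matrices, with the ket matrices acting by left multiplication and the bra matrices by right multiplication — the contracted chain becomes exactly the composition of logical superoperators $\varepsilon^{[i_k,j_k]}(X)=\sum_{s}A[i_k,s]\,X\,A^\dagger[j_k,s]$. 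The left end yields the base case $\rho[i_1,j_1]=\sum_s\bigl(A[i_1,s]\ket{L}\bigr)\bigl(\bra{L}A^\dagger[j_1,s]\bigr)$, a correlation-space operator, and the right end yields the scalar coefficient $\sigma^{[i_n,j_n]}(X)=\sum_s v^\dagger[i,s]\,X\,v[j,s]$. Reading off the coefficient of $\ket{i_1\ldots i_n}\bra{j_1\ldots j_n}$ gives precisely the MPO form in the statement, with $v[i_n,s_n]$ playing the role of the vectors $v[j_k,s]$ claimed.

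The main obstacle is the index bookkeeping in the tracing-out step: one must track exactly which leg of each tensor gets conjugated and transposed in passing from $\ket{\Psi}\bra{\Psi}$ to a superoperator acting on correlation-space operators, so that the bra chain genuinely becomes right multiplication by $A^\dagger[j_k,s]$ and the composition order $\sigma^{[i_n,j_n]}\circ\varepsilon^{[i_{n-1},j_{n-1}]}\circ\cdots\circ\varepsilon^{[i_2,j_2]}$ comes out in the stated direction. A secondary point worth stating explicitly is that this argument only bounds $\xi$ by the Schmidt rank of the purification, not of $\rho$ itself; the representation is therefore always exact but efficient only when $\rho$ (hence some purification) has bounded operator Schmidt rank. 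Everything else — existence of the purification and of an exact MPS for a one-dimensional pure state — is standard and can simply be cited.
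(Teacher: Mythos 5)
Your proposal is correct and follows essentially the same route as the paper's proof: purify $\rho$ with one ancilla per physical qubit, apply the Vidal-style iterated Schmidt decomposition to the real--ancilla pairs to obtain tensors $A[i_k,s_k]$ and boundary vectors, then trace out the ancillas so that the sum over $s_k$ turns the ket and bra chains into the superoperators $\varepsilon^{[i_k,j_k]}$ with the stated boundary terms. Your explicit caveat that the bond dimension is controlled by the Schmidt ranks of the purification rather than of $\rho$ itself matches the remark the paper makes at the end of its own proof.
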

The proof of lemma \ref{lemma:MPO} is given in appendix \ref{app: MPO}. When a single qubit measurement is performed, the associated logical superoperator is evolved, which is then given by the following proposition.  
\begin{proposition}\label{lemma:MPO meas}
	When the $k^{th}$ physical qubit is measured in an orthonormal basis $|v_m\rangle$, the logical superoperator $\varepsilon^{[i_k, j_k]}$ evolves to $\varepsilon^{[v_m, v_m]} (\rho)=\sum_s A[v_m,s]\rho A^\dag[v_m,s]$, where $A[v_m,s]=\sum_i \langle v_m|i\rangle A[i,s]$.
\end{proposition}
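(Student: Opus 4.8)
The plan is to read off the evolution of the logical superoperator directly from the MPO representation of Lemma \ref{lemma:MPO}, using nothing more than linearity of superoperator composition. Measuring the $k^{\text{th}}$ physical qubit in the orthonormal basis $\{|v_m\rangle\}$ and obtaining outcome $m$ sends $\rho$ to $(\mathbb{I}\otimes\cdots\otimes |v_m\rangle\langle v_m|\otimes\cdots\otimes\mathbb{I})\,\rho\,(\mathbb{I}\otimes\cdots\otimes |v_m\rangle\langle v_m|\otimes\cdots\otimes\mathbb{I})$, up to the usual renormalisation by the outcome probability, which I suppress throughout as elsewhere in this section. So the first step is simply to apply this projector to the MPO form of $\rho$.

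Next I would substitute $\rho=\sum_{i,j}\sigma^{[i_n,j_n]}\circ\varepsilon^{[i_{n-1},j_{n-1}]}\circ\cdots\circ\varepsilon^{[i_2,j_2]}(\rho[i_1,j_1])\,|i_1\ldots i_n\rangle\langle j_1\ldots j_n|$ and evaluate the action of $|v_m\rangle\langle v_m|$ on the $k^{\text{th}}$ ket-bra factor. Projecting that factor replaces $|i_k\rangle$ by $\langle v_m|i_k\rangle\,|v_m\rangle$ and $\langle j_k|$ by $\langle j_k|v_m\rangle\,\langle v_m|$, while leaving all the other physical ket-bra factors and all the logical data untouched; hence every term of the sum acquires a scalar prefactor $\langle v_m|i_k\rangle\,\langle j_k|v_m\rangle$.

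The key step is then to collect, for each fixed assignment of all indices other than $i_k$ and $j_k$, the partial sum over $i_k$ and $j_k$. Because superoperator composition is linear in each factor, and because $\varepsilon^{[i_k,j_k]}(\cdot)=\sum_s A[i_k,s](\cdot)A^\dag[j_k,s]$ depends on $i_k$ only through the left operators $A[i_k,s]$ and on $j_k$ only through the right operators $A^\dag[j_k,s]$, one gets
\begin{equation}
\sum_{i_k,j_k}\langle v_m|i_k\rangle\,\langle j_k|v_m\rangle\,\varepsilon^{[i_k,j_k]}(\rho)=\sum_s\Big(\sum_{i_k}\langle v_m|i_k\rangle A[i_k,s]\Big)\rho\Big(\sum_{j_k}\langle j_k|v_m\rangle A^\dag[j_k,s]\Big).
\end{equation}
Setting $A[v_m,s]=\sum_i\langle v_m|i\rangle A[i,s]$ and using $\big(\sum_i\langle v_m|i\rangle A[i,s]\big)^\dag=\sum_i\langle i|v_m\rangle A^\dag[i,s]$, the right-hand side is precisely $\varepsilon^{[v_m,v_m]}(\rho)=\sum_s A[v_m,s]\rho A^\dag[v_m,s]$. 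Since the factors $\sigma^{[i_n,j_n]}$ and $\varepsilon^{[i_\ell,j_\ell]}$ for $\ell\neq k$ are unchanged, the MPO retains its form with the $k^{\text{th}}$ logical superoperator replaced by $\varepsilon^{[v_m,v_m]}$, as claimed; the boundary cases $k=1$ and $k=n$ are handled identically, with $\rho[i_1,j_1]$ or $\sigma^{[i_n,j_n]}$ playing the role of the contracted factor.

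I do not expect a genuine obstacle here; the only thing to be careful about is the bookkeeping of which scalar attaches to which side, namely that $\langle v_m|i_k\rangle$ multiplies the left operator and its conjugate $\langle j_k|v_m\rangle=\overline{\langle v_m|j_k\rangle}$ multiplies the right operator, so that the collected superoperator is again of the Kraus-like form $\sum_s A[v_m,s](\cdot)A^\dag[v_m,s]$ with one consistent family $\{A[v_m,s]\}$, and that the measurement renormalisation is treated separately.
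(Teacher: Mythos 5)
Your proof is correct and follows the same route the paper indicates: apply the projector $|v_m\rangle\langle v_m|$ to the $k^{\text{th}}$ physical factor of the MPO expansion, absorb the resulting scalars $\langle v_m|i_k\rangle\langle j_k|v_m\rangle$ by linearity, and collect the sum into $\varepsilon^{[v_m,v_m]}$ with $A[v_m,s]=\sum_i\langle v_m|i\rangle A[i,s]$. Your explicit bookkeeping of which overlap attaches to the left and right Kraus-like operators is exactly the detail the paper's one-line proof leaves implicit.
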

The proof simply relies on applying the measurement and studying its effect on the relevant operators. 

Thus, having derived and expression for the MPO of a mixed state, we now consider a couple of examples where this representation can be applied. The first example is the density matrix corresponding to a pure cluster state. 
\begin{example}
	The MPO representation of a cluster state is given by:  
	\begin{align}
	\rho=&\sum_{i, j} \sigma^{[i_n, j_n]} \varepsilon^{[i_{n-1}, j_{n-1}]} \circ \ldots \circ \varepsilon^{[i_2, j_2]} (|+\rangle \langle +|) \notag \\
	& |i_1 \ldots i_n\rangle \langle j_1 \ldots j_n|,
	\end{align}
	where $\varepsilon^{[i,j]}(\rho)=A[i] \rho A[j]$, $A[k]=H|k\rangle \langle k|$ and $\sigma^{[i,j]}(\rho)=\langle i |\rho|j\rangle$. 
\end{example}
Next, we consider the case of the maximally mixed state on $n$ qubits, $\frac{1}{2^n}\mathbb{I}_n$. This can be easily obtained by first constructing the all zero state, and then applying the channel $K_s=\frac{1}{\sqrt{2}}X^s$ to every qubit. 
\begin{example}
	The MPO representation of the maximally mixed state on $n$ qubits is given by
	\begin{align}
	\frac{1}{2^n}\mathbb{I}_n= &\sum_{i, j} \sigma^{[i_n, j_n]} \circ \varepsilon^{[i_{n-1}, j_{n-1}]} \circ \dots \circ \varepsilon^{[i_1, j_1]}(|0\rangle \langle 0|) \notag \\
	&|i_1 \ldots i_n\rangle \langle j_1 \ldots j_n|
	\end{align}
	where $\varepsilon^{[i,j]}(\rho)=\frac{1}{2}\sum_s A[i]X^s \rho X^s A[j]$, $ \sigma^{[i_n, j_n]}(\rho)=\frac{1}{2}\sum_{t} \langle i_n|X^t\rho X^t |j_n\rangle $, and $A[k]=|0\rangle \langle k|$. 
\end{example}
Finally, we consider the input state used in the one clean qubit model of computation, which is the state $|0\rangle \langle0| \otimes \frac{1}{2^n}\mathbb{I}$. 
\begin{example}
	The state $|0\rangle \langle0| \otimes \frac{\mathbb{I}}{2^n}$, which was the input state of the one clean qubit Model can be expressed as: 
	\begin{align}
	|0\rangle \langle 0| \otimes \frac{\mathbb{I}}{2^n}= & \sum_{\textbf{i}, \textbf{j}} \sigma^{[i_n, j_n]} \circ \varepsilon^[{i_{n-1}, j_{n-1}]} \circ \dots \circ \epsilon^{[i_0, j_0]}|0\rangle \langle 0| \notag \\
	& |i_i \ldots i_n \rangle \langle j_1 \ldots j_n |
	\end{align}
	where $\epsilon^{[i_0,j_0]}(\rho )=A[i_0]\rho A[j_0]^\dag$, $\varepsilon^{[i_k,j_k]}(\rho )=\frac{1}{2}\sum_{s_k} A[i_k]X^s \rho X^s A[j_k] ^\dag$,  $\sigma^{[i_n, j_n]}(\rho )=\frac{1}{2}\sum_{s_n} \langle i_n|X^t \rho X^t |j_n\rangle $, and $A[k]=|0\rangle \langle k|$. Here, we can note that the assumption of translational invariance does not hold. 
\end{example}
Thus, this shows us that the MPO framework can be applied to mixed states, and in the next section, we shall consider the case of noisy one-dimensional MBQC expressed in the MPO framework. 
\section{Noisy MBQC in the MPO framework} \label{sec:MPOMBQC}
In the following, we wish to consider one-dimensional MBQC in the case of a noisy resource state, and once again map the effect of physical errors to logical errors. In order to do so, we successively consider the effect of a Pauli operator, a unitary operator and finally a channel on a physical qubit and determine its effect on the associated logical operator.  

When a Pauli operator is applied to the state, the logical operator associated with the relevant qubit is transformed. Proposition \ref{prop: Pauli Noise} allows for this mapping to be determined:
\begin{proposition}\label{prop: Pauli Noise}
	If a Pauli operator $\sigma_{ab}$ is applied to the $j^{th}$ qubit, the logical operator $A[i_j]$ is mapped to the logical operator $\sigma_{0a}A[i_j]\sigma_{ab}$, as illustrated in table \ref{tab:Pauli on MPO}.
\end{proposition}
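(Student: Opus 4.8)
The plan is to work directly from the MPS representation of the cluster state, $|\psi\rangle = \sum_{\mathbf i}\langle i_n|A[i_{n-1}]\cdots A[i_1]|+\rangle\,|i_1\cdots i_n\rangle$ with $A[k]=H|k\rangle\langle k|$, and to track what the physical operation $\sigma_{ab}$ on qubit $j$ does to the tensor $A[i_j]$. Since $\sigma_{ab}$ acts only on the $j$-th tensor factor, one inserts it and expands in the computational basis, $\sigma_{ab}|i_j\rangle=\sum_m|m\rangle\langle m|\sigma_{ab}|i_j\rangle$; relabelling the ket summation index then shows that the coefficient of $|i_1\cdots i_n\rangle$ in the new state is obtained by replacing the bulk tensor $A[i_j]$ with $\tilde A[i_j]:=\sum_m\langle i_j|\sigma_{ab}|m\rangle\,A[m]$ (boundary qubits are handled analogously). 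For the MPO/density-operator version the same manipulation applied to the bra index replaces the accompanying tensor by the adjoint expression, but since each $\sigma_{ab}$ is Hermitian this is automatic. Thus the whole claim reduces to the single-tensor identity $\tilde A[i_j]=\sigma_{0a}A[i_j]\sigma_{ab}$.

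The second step is a short direct computation using the explicit form of the tensors. Writing $\sigma_{ab}=i^{ab}X^aZ^b$, one has $\sigma_{ab}|m\rangle=i^{ab}(-1)^{bm}|m\oplus a\rangle$, hence $\langle i_j|\sigma_{ab}|m\rangle=i^{ab}(-1)^{bm}\delta_{m,\,i_j\oplus a}$, so only $m=i_j\oplus a$ survives in the sum and
\begin{equation}
\tilde A[i_j]=i^{ab}(-1)^{b(i_j\oplus a)}\,H\,|i_j\oplus a\rangle\langle i_j\oplus a|.
\end{equation}
Now use $HX^a=Z^aH$ (trivial for $a=0$, and $HX=ZH$ for $a=1$) to write $H|i_j\oplus a\rangle=HX^a|i_j\rangle=Z^aH|i_j\rangle$, and use $\langle i_j|\sigma_{ab}=i^{ab}(-1)^{b(i_j\oplus a)}\langle i_j\oplus a|$ to reabsorb the phase and the flipped bra. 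This gives $\tilde A[i_j]=Z^aH|i_j\rangle\langle i_j|\sigma_{ab}=Z^a\big(H|i_j\rangle\langle i_j|\big)\sigma_{ab}=Z^aA[i_j]\sigma_{ab}$, and since $Z^a=\sigma_{0a}$ this is exactly the asserted form. Specialising to the four choices of $(a,b)$ then reproduces table \ref{tab:Pauli on MPO}.

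There is no genuine obstacle here: the argument is essentially a one-line index computation once the two structural facts are in place. The only points requiring care are the bookkeeping of the Pauli phase $i^{ab}(-1)^{bm}$ and making sure the relabelling of the MPS bond index is done on the correct side, so that one obtains $\sum_m\langle i_j|\sigma_{ab}|m\rangle A[m]$ rather than its transpose. Once the commutation $HX^a=Z^aH$ is invoked, everything collapses to $\sigma_{0a}A[i_j]\sigma_{ab}$, and the density-operator (MPO) statement follows by applying the identical manipulation to the conjugate tensor together with Hermiticity of $\sigma_{ab}$.
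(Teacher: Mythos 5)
Your proof is correct, and the phase bookkeeping checks out: the update rule $\tilde A[i_j]=\sum_m\langle i_j|\sigma_{ab}|m\rangle A[m]$ has the right orientation, and combining $\langle i_j|\sigma_{ab}=i^{ab}(-1)^{b(i_j\oplus a)}\langle i_j\oplus a|$ with $HX^a=Z^aH$ indeed collapses the expression to $\sigma_{0a}A[i_j]\sigma_{ab}$, in agreement with table \ref{tab:Pauli on MPO}. Your route differs from the paper's in organisation rather than substance. The paper's proof (appendix \ref{app: pauli on MPO}) works with the explicit cluster amplitudes $(-1)^{i_k i_{k+1}}$ and treats $X$, $Z$ and $Y$ as three separate cases, relabelling the flipped bit each time and reading the modified tensor off the amplitude pattern; along the way it also records the equivalent rewritings such as $A[i_j]Z=XA[i_j]$ and $ZA[i_j]Y=YA[i_j]X$, which are convenient in later examples. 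You instead first establish a single generic tensor-update rule, valid for any operator acting on a physical leg, and then do one parametrized computation in $(a,b)$ using the operator form $A[k]=H|k\rangle\langle k|$. What this buys is uniformity---one calculation yields the stated form $\sigma_{0a}A[i_j]\sigma_{ab}$ for all four Paulis at once---and your intermediate rule $\tilde A[i_j]=\sum_m\langle i_j|\sigma|m\rangle A[m]$ is precisely the linear structure that Propositions \ref{prop:unitary} and \ref{prop: CPTP} later exploit, so your argument slots naturally into what follows. Your remarks about the boundary tensors and about the bra side of the density operator (handled by the same manipulation plus Hermiticity of $\sigma_{ab}$) are fine, since the proposition itself only concerns the bulk logical operator, which is also all the paper's proof addresses.
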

The proof is given in appendix \ref{app: pauli on MPO}, where we successively consider the effect of applying an individual Pauli operator to the physical state and determine its effect on the relevant logical operator. 
\begin{table}[h]
	\centering 
	\begin{tabular}{c c c c c } 
		\hline\hline 
		Pauli operator & $X$	   & $Z$ & $iXZ$   \\[1ex]
		
		\hline 
		Logical operator 	& $ZA[k]X$ & $A[k]Z$ & $iZA[k]XZ$  \\ [0.5ex]
		\hline 
	\end{tabular}
	\caption[Mapping from Pauli errors to logical errors.]{\emph{Mapping from Pauli errors to logical errors.} A Pauli operator is applied to the $k^{th}$ physical qubit, which will cause its associated logical operator to evolve to a new logical operator.} 
	\label{tab:Pauli on MPO}
\end{table}	
Next, this allows for the effect of a unitary operator on the physical qubit to be determined in proposition \ref{prop:unitary}. Indeed, the unitary operator can now be decomposed in the Pauli basis, and thus its effect mapped onto the associated logical operator, bringing $A[k]$ to the transformed $\tilde{A}[k]$.
\begin{proposition}\label{prop:unitary}
	When a unitary operator $U=\sum_{g,h} u_{gh}\sigma_{gh}$ acts on the $j^{th}$ qubit, the associated logical operator evolves to: 
	\begin{equation}
	\tilde{A}[i_j] =\xi (A[i_j])= \sum_{g,h} u_{gh}\sigma_{0g}A[i_j] \sigma_{gh}. 
	\end{equation}
	This defines a channel $\xi$ which maps logical operators to logical operators. 
\end{proposition}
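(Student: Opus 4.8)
The plan is to reduce the statement to the Pauli case already established in Proposition~\ref{prop: Pauli Noise} and then extend by linearity. The key observation is that applying an arbitrary single-qubit operator $O$ to the $j^{th}$ physical qubit of the MPS/MPO and re-expanding in the computational basis induces a transformation $A[i_j]\mapsto\tilde A[i_j]$ of the associated logical operator that is \emph{linear in the entries of} $O$: the amplitude attached to $|i_1\dots i_n\rangle$ picks up a factor $\langle i_j|O|i_j'\rangle$ and a sum over $i_j'$ with $i_j$ replaced by $i_j'$, so the new tensor is $\tilde A[i_j]=\sum_{i_j'}\langle i_j|O|i_j'\rangle A[i_j']$, which is manifestly linear in $O$.

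Carrying this out: first decompose the unitary in the Pauli basis, $U=\sum_{g,h}u_{gh}\sigma_{gh}$, so that $\langle i_j|U|i_j'\rangle=\sum_{g,h}u_{gh}\langle i_j|\sigma_{gh}|i_j'\rangle$ and hence, by the linearity above, $\tilde A[i_j]=\sum_{g,h}u_{gh}\big(\sum_{i_j'}\langle i_j|\sigma_{gh}|i_j'\rangle A[i_j']\big)$. Next, Proposition~\ref{prop: Pauli Noise} (equivalently Table~\ref{tab:Pauli on MPO}) identifies each bracketed term as the logical operator $\sigma_{0g}A[i_j]\sigma_{gh}$ produced by the single Pauli $\sigma_{gh}$. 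Collecting the terms gives $\tilde A[i_j]=\sum_{g,h}u_{gh}\,\sigma_{0g}A[i_j]\sigma_{gh}=:\xi(A[i_j])$, the stated formula. For the final assertion, that $\xi$ is a channel, I would pass to the density-operator picture of Lemma~\ref{lemma:MPO}: applying $U$ to physical qubit $j$ sends the ket-side logical operator through $\xi$ and the bra-side operator through its conjugate, so the logical superoperator $\varepsilon^{[i_j,j_j]}$ is carried to a superoperator of the same Kraus-like form. Complete positivity of the induced map is then manifest from that $\sum_m K_m(\cdot)K_m^{\dagger}$ structure, and trace preservation (together with the correct normalisation of the new Kraus operators) follows from $U^{\dagger}U=\mathbb{I}$, which is exactly the relation $\sum_{g,h}|u_{gh}|^2=1$ coming from $\mathrm{Tr}(U^{\dagger}U)=2$.

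The main obstacle is this second half. The first part is essentially bookkeeping once Proposition~\ref{prop: Pauli Noise} is granted, since it supplies the nontrivial content — that the reindexing description of ``apply $\sigma_{gh}$ to qubit $j$'' coincides termwise with the conjugation $\sigma_{0g}A[i_j]\sigma_{gh}$. The delicate step is checking that the channel properties genuinely descend from unitarity of $U$ to $\xi$ at the level of the correlation space, i.e. that the Kraus-like operators of the transformed logical superoperator still satisfy the completeness relation; this is where the normalisation factors enter and must be tracked carefully. An alternative, more self-contained route would be to re-run the purification and successive Schmidt-decomposition argument used for Proposition~\ref{prop: Pauli Noise} in the appendix with $U$ in place of a single Pauli operator, but the linearity argument above is shorter and makes the channel structure transparent.
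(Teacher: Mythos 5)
Your proof takes essentially the same route as the paper: decompose $U$ in the Pauli basis, apply Proposition~\ref{prop: Pauli Noise} termwise, and conclude by linearity, which you make explicit via the tensor update $\tilde A[i_j]=\sum_{i_j'}\langle i_j|U|i_j'\rangle A[i_j']$. Your additional discussion of the channel property goes beyond the paper's one-line argument and is broadly sound, though note that the completeness relation for the transformed logical operators requires the full operator identity $U^{\dagger}U=\mathbb{I}$, which is strictly stronger than the scalar condition $\sum_{g,h}|u_{gh}|^{2}=1$ (the latter is just $\tfrac{1}{2}\mathrm{Tr}(U^{\dagger}U)=1$ and would also hold for non-unitary operators of the same norm).
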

\begin{proof}
	The unitary operator $U$ may be decomposed in the Pauli bases: $U=\sum_{g,h} u_{gh}\sigma_{gh}$. Then, by proposition \ref{prop: Pauli Noise}, the effect of Pauli operators on logical operators can be determined, and by linearity the final result obtained.  
\end{proof}
Finally, we consider the case when a local noise channel acts on the $l^{th}$ physical qubit, thus resulting in a noisy resource state. This is modelled by a CPTP map with Kraus decomposition $\eta(\rho)=\sum_m K_m \rho K_m^\dag$, and its effect will be to map the logical superoperator associated with qubit $l$ to a new logical superoperator. The mapping in question given by the following proposition: 
\begin{proposition}\label{prop: CPTP}
If the $l^{th}$ qubit undergoes a CPTP map, represented by the quantum channel $\eta (\rho)=\sum_m K_m \rho K_m^\dag$, where each Kraus operator can be decomposed in the Pauli basis as $K_m=\sum_{a,b}^{(m)}k_{ab}\sigma_{ab}$, then the logical operator is mapped to $\tilde{\varepsilon}^{[i,j]}(\rho)=\sum_m \xi_m(A[i]) \rho \xi_m(A[i])$, where: 
\begin{equation}
\xi_m(A[i])=\sum_{a,b} k_{ab}^{(m)}\sigma_{0a}^{(m)} A[i] \sigma_{ab}^{(m)}. 
\end{equation}
\end{proposition}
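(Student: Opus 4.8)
The plan is to derive Proposition~\ref{prop: CPTP} by applying the Pauli- and unitary-noise results of Propositions~\ref{prop: Pauli Noise} and~\ref{prop:unitary} one Kraus operator at a time, and then summing over the Kraus index. The essential point is that the argument used to prove Proposition~\ref{prop:unitary} --- expand the operator acting on the physical qubit in the Pauli basis $\{\sigma_{ab}\}$, transport each $\sigma_{ab}$ into the correlation space via Proposition~\ref{prop: Pauli Noise}, and recombine by linearity --- never uses that $U$ is unitary; it uses only that the assignment sending a single-qubit operator to its action on the local MPO tensor is linear. So I would first isolate this as an observation: for \emph{any} single-qubit linear operator $M=\sum_{a,b}m_{ab}\sigma_{ab}$ acting on physical qubit $l$, the local logical operator transforms as $A[i_l]\mapsto \sum_{a,b}m_{ab}\,\sigma_{0a}A[i_l]\sigma_{ab}$. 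Specialising to $M=K_m$ immediately gives $A[i_l]\mapsto \xi_m(A[i_l])=\sum_{a,b}k_{ab}^{(m)}\sigma_{0a}^{(m)}A[i_l]\sigma_{ab}^{(m)}$, i.e.\ the non-trace-preserving analogue of the channel $\xi$ of Proposition~\ref{prop:unitary}.

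Second, I would feed this into the MPO structure of Lemma~\ref{lemma:MPO}. The channel $\eta(\rho)=\sum_m K_m\rho K_m^\dag$ applied to physical qubit $l$ acts on the ket index of the local tensor with $K_m$ and on the bra index with $K_m^\dag$, for each fixed $m$. On the ket side the observation above sends the local logical operator $A[i_l,s]$ to $\xi_m(A[i_l,s])$; on the bra side, because $\big(\sum_{j'}(K_m)_{j' j}A[j',s]\big)^\dag=\xi_m(A[j_l,s])^\dag$, the factor $A^\dag[j_l,s]$ is sent to $\xi_m(A[j_l,s])^\dag$ with the \emph{same} coefficients $k_{ab}^{(m)}$. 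Collecting the internal index $s$ and the Kraus index $m$ then yields the updated logical superoperator $\tilde\varepsilon^{[i,j]}(\rho)=\sum_m\xi_m(A[i])\,\rho\,\xi_m(A[j])^\dag$, which is exactly the Kraus-like form asserted in the proposition (with the two physical labels coinciding in the diagonal/rank-one cluster case written there). A brief remark that $\sum_m\xi_m$ inherits normalisation from $\sum_m K_m^\dag K_m=\mathbb{I}$, so that $\tilde\varepsilon^{[i,j]}$ is again a legitimate logical superoperator, completes the statement.

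The step I expect to be the main obstacle is the bra-side bookkeeping: one has to check that conjugating $K_m$ (rather than merely transposing it), together with the phase convention $\sigma_{gh}=i^{gh}X^gZ^h$ built into Proposition~\ref{prop: Pauli Noise}, really does produce $\xi_m(A[j_l,s])^\dag$ with coefficients matching those appearing on the ket side, so that the two halves assemble into a single coherent Kraus decomposition of $\tilde\varepsilon^{[i,j]}$ rather than a sum of mismatched operators. Once that is verified, the remainder --- the linearity extension of Proposition~\ref{prop:unitary} and the substitution $M=K_m$ --- is entirely routine, and the full argument can be relegated to an appendix in the same style as the proofs of Lemmas~\ref{lemma: noisy resource} and~\ref{lemma: noisy meas}.
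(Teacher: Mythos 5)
Your proposal is correct and follows essentially the same route as the paper's appendix proof: decompose each Kraus operator $K_m$ in the Pauli basis, transport each $\sigma_{ab}$ into the correlation space via Proposition~\ref{prop: Pauli Noise}, use linearity to handle the ket side with coefficients $k_{ab}^{(m)}$ and the bra side with $k_{gh}^{(m)*}$, and resum over $m$ to obtain $\tilde{\varepsilon}^{[i,j]}(\rho)=\sum_m \xi_m(A[i])\,\rho\,\xi_m^\dag(A[j])$. Routing the linearity step through the proof of Proposition~\ref{prop:unitary} rather than citing Proposition~\ref{prop: Pauli Noise} directly is only a cosmetic difference, and your corrected Kraus-like form (with $\xi_m^\dag(A[j])$ on the right) matches what the paper actually derives in its appendix.
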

\begin{proof}
The result is obtained by considering the Kraus representation of the CPTP map and decomposing it in the Pauli basis. By applying Proposition \ref{prop: Pauli Noise}, the final result is obtained. 
\end{proof}

Next, we consider first examples of Pauli channels which may act on the cluster state, and their effect within the correlation space. 
\begin{example} 
	The phase-flip channel has Kraus operators $K_s= \frac{1}{\sqrt{2}}Z^s$, $s=0,1$. The logical operator will thus evolve to $\xi_s(A[i])=\frac{1}{\sqrt{2}}A[i]Z^s$, and the logical superopertaor to: 
	\begin{equation}
	\tilde{\varepsilon}^{[i,j]}(\sigma)=\frac{1}{2}\sum_s A[i]Z^s \sigma  Z^sA[j].
	\end{equation}
\end{example}
\begin{example}
	The bit-flip channel has Kraus operators $K_s= \frac{1}{\sqrt{2}}X^s$, $s=0,1$. The logical operator will evolve to: $\xi_s(A[i])=\frac{1}{\sqrt{2}}Z^sA[i]X^s$, and thus the superoperator to: 
	\begin{equation}
	\tilde{\varepsilon}^{[i,j]}(\sigma)=\frac{1}{2}\sum_s Z^sA[i]X^s \sigma  X^sA[j] Z^s.
	\end{equation}
\end{example}

Next, we show how our framework allows for the study of error propagation by considering Pauli channels followed by a measurement in the $X$ basis. Note that it can be shown that if a qubit is measured in the $X$ basis, then its logical operator $A[i]$ is mapped to $\frac{1}{\sqrt{2}}HZ^m$, with $m$ denoting the measurement outcome.  
\begin{example} 
If a bit-flip channel is applied to the $k^{th}$ qubit, then its associated logical superoperator evolves to $\tilde{\varepsilon}^{[i_k,j_k]}(\rho )=\frac{1}{2}\sum_s Z^sA[i_k]X^s\rho X^sA[j_k]Z^s$. If the qubit is now measured in the $X$ basis, the associated logical superoperator to evolve to $\tilde{\varepsilon}^{[i_l,j_l]}(\rho )=\frac{1}{4}\sum_s Z^s HZ^m X^s\rho X^sZ^m H Z^s=\frac{1}{2} HZ^m \rho Z^m H$, and thus the output has not been affected by the channel. 
\end{example}

\begin{example}
The channel given by Kraus operators $K_0=\sqrt{p_0} \mathbb{I}$, $K_1=\sqrt{p_1}X$ and $K_2= \sqrt{p_2}Y$ is applied to the $k^{th}$ physical qubit. Thus, the associated superoperator evolves to: $\tilde{\varepsilon}^{[i_k, j_k]}=\sqrt{p_0}A[i_k]\rho A^\dag[j_k]+\sqrt{p_1}ZA[i_k]X\rho XA^\dag[j_k]Z+\sqrt{p_2}ZA[i_k]Y\rho Y A^\dag[j_k]Z$. Next, the qubit is measured in the $X$ basis, resulting in:  $\tilde{\varepsilon}^{[x_m, x_m]}=\frac{1}{2}(\sqrt{p_0}+ \sqrt{p_1}) HZ^m \rho Z^m H+\frac{1}{2}\sqrt{p_2}HZ^m Z \rho Z Z^m H$. This is equivalent to a $Z$ gate applied with probability $p_2$ to the qubit before the measurement. 
\end{example}

Finally, we consider the more general model of unitary noise by studying the effect of a Hadamard gate probabilistically applied to the physical qubit followed by an $X$ measurement, as in example \ref{ex: H meas}.
\begin{example}
The channel with Kraus operators $K_1=\sqrt{1-p} \mathbb{I}$ and $K_2=\sqrt{p}H$ is applied to the physical qubit. Thus, the associated logical superoperator evolves to $\tilde{\varepsilon}^{[i_k, j_k]}=\tilde{A}[i_k] \rho \tilde{A}^\dag [j_k]$, where:
\begin{equation}
\tilde{A}[i]=\frac{1}{\sqrt{2}}A[i]Z+\frac{1}{\sqrt{2}}ZA[i]X.
\end{equation} 
Next, the qubit is measured in the $X$ basis, and thus the logical operator is transformed to $\tilde{\varepsilon}^{[x_m, x_m]} (\rho)=\tilde{A}[x_m] \rho \tilde{A}^\dag [x_m]$, with: 
\begin{equation}
\tilde{A}[x_m]=\frac{1}{\sqrt{2}}HZ^m Z+\frac{1}{\sqrt{2}}ZHZ^mX, 
\end{equation} 
which reduces to $\tilde{A}[x_m]= \sqrt{2} H|k\rangle \langle k|$.
\end{example}

\section{Discussion} \label{sec:discussion}
Noise remains one of the major challenges in the quest to build a scalable quantum computer. In this paper, we have introduced two frameworks which allow for the effect of physical noise on the computation to be studied. More precisely, we consider the effect of single qubit local noise channels in the context of one-dimensional MBQC. This is first achieved in the circuit model, and next in the derived MPO representation. 

First, we considered teleportation in the presence of a noisy resource state, and found that the effect of the noise channel on the output could be determined. This then motivated us to consider the impact of noise in one-dimensional MBQC, as indeed, preparing pure states and performing clean measurements is experimentally challenging. Here, the introduced framework allowed us to understand the impact of having access to a mixed resource state on the computation, and the effect of performing noisy measurements. Moreover, this could be used in order to simulate noisy computation. 

Thus, the next step would be to use these frameworks in order to represent noisy computations, and perform simulations. When MBQC is represented in the MPS formalism, the goal is to implement an operator $A$ via successive measurements in the correlation space. Due to the random nature of measurement outcomes, random Pauli by-product operators will occur throughout. By exploiting the properties of Clifford operators, we find that we are instead implementing the operator $UA$ where $U$ is a unitary operator depending on measurement outcomes and which may be subsequently corrected for. A similar analysis could be performed in the case of mixed states acted upon by logical superoperators. Another interesting question would be to study the numerical simulation of errors on the state and their impact on the computation. Overall, MPO offer us a flexible framework in which we can study how noise on cluster state computations affects the computation. Here, our simple examples illustrate its capabilities. Its full power might be in modelling the error channels that arise in experiments, and may prove to be useful for experimental groups in their modelling. 

An alternative direction for future research would be to consider more complex noise channels, and studying how the concatenation of building blocks will impact on the logical operators implemented by the computation. Indeed, the computation requires for precisely these blocks to be concatenated, sequentially implementing a logical operator on the teleported state. Now, in the presence of noise, errors will be layered throughout the computation. It would be useful to study how the measurement patterns, the logical operators and the noise channels relate to one another, and what is their effect on the computation output. Furthermore, a natural extension would be to study the generalisation of this framework to two-dimensions, by for instance considering grid-like structures \cite{gross2010quantum}. 

We have studied the effect of local noise channels on physical states and their impact on the computation performed. The former provides a better understanding of the role of errors in the computation, and the latter could allow for the simulation of noisy quantum computation. 
We hope that this work will stimulate further investigations of measurement-based quantum computations in the presence of noise and with mixed state resources.

\begin{acknowledgements}
We thank Matty Hoban and Hussain Anwar for stimulating discussions. This work was supported by the EPSRC.	
\end{acknowledgements}

\onecolumngrid  
\appendix 
\section{Proofs}
\subsection{Proof of Proposition $1$}
\begin{proof}
If we express the state on the first two qubits in the Bell basis, then the state of the joint system $\rho \otimes \Lambda $ is given by: 
\begin{equation}
\rho \otimes \Lambda =\frac{1}{4} \sum_{u,v} \sum_{i,j} \alpha_{uv} (-1)^{u k_1 +v k_2}|B_{u \oplus i, k_1} \rangle \langle B_{v \oplus j, k_2} |  \otimes \varepsilon (|i\rangle \langle j|).
\end{equation}
Next, the first two qubits are measured in the Bell basis $(P_{st} \otimes \mathbb{I})(\rho \otimes \Lambda)(P_{st} \otimes \mathbb{I})$, where $P_{st}=|P_{st} \rangle \langle P_{st}|$. The orthogonality of Bell states will demand that $s=u \oplus i= v \oplus j$ and $t=k_1=k_2$, and thus we have that: 
	\begin{align}
	(P_{st} \otimes \mathbb{I})(\rho \otimes \Lambda)(P_{st} \otimes \mathbb{I}) 	&=\frac{1}{4} \sum_{u,v,k,l} \alpha_{uv} (-1)^{(u+v)t}P_{st} \otimes K_{kl} X^s |u\rangle \langle v| X^s K_{kl}^\dag, \\
	&= \frac{1}{4} \sum_{u,v,k,l} \alpha_{uv} P_{st} \otimes K_{kl} X^s Z^t |u\rangle \langle v| Z^t X^s K_{kl}^\dag, \\
	&= \frac{1}{4} \sum_{k,l} P_{st} \otimes X^s Z^t  K_{kl} \rho K_{kl} ^\dag  Z^t X^s, \\
	&= \frac{1}{4} P_{st} \otimes X^s Z^t  \varepsilon (\rho )  Z^t X^s. 
	\end{align}
	We can see that the state $\varepsilon (\rho)$  has been teleported instead of the state $\rho $, and thus the quantum channel can be thought of as acting directly on the input state itself.  
\end{proof}
\subsection{Proof of Lemma $3$} \label{app: noisy cluster}
\begin{proof}

The quantum channel $\alpha_1(\rho)=\sum_m K_m \rho K_m^\dag$ is applied to the cluster state. Each Kraus operator $K_m$ can be decomposed in the Pauli basis as $K_m= a_{00}^{(m)}\mathbb{I} +a_{01}^{(m)} X+a_{10}^{(m)} Z -i a_{11}^{(m)} ZX$, where $a_{ij} \in \mathbb{C}$ for $i,j =0,1$. Here, we know that this channel is applied to the cluster and can thus exploit this information by considering the action of a Kraus operator $K_m$ on an $X$ eigenstate $|+\rangle$. It can easily be seen that this is equivalent to applying the operator $\tilde{K}_m=(a_{00}^{(m)} +a_{01}^{(m)})\mathbb{I} +( a_{10}^{(m)}-i a_{11}^{(m)})Z$ to the $|+\rangle$ state. By defining $\tilde{a}_{00}^{(m)}=a_{00}^{(m)} +a_{01}^{(m)}$ and $\tilde{a}_{10}^{(m)}= a_{10}^{(m)}-i a_{11}^{(m)}$, this new modified Kraus operator can be expressed as: $\tilde{K}_m=\tilde{a}_{00}^{(m)}\mathbb{I} + \tilde{a}_{10}^{(m)} Z$. 

Thus, the original quantum channel $\alpha_2(\rho)=\sum_m K_m \rho K_m^\dag$ has been mapped to a new channel $\tilde{\alpha}_2(\rho)=\sum_m \tilde{K}_m \rho \tilde{K}_m^\dag$, where $\tilde{K}_m=\sum_u \tilde{a}_{u0}^{(m)}Z^u$ and where $\tilde{a}_{u0}=\sum_v (-i)^{uv}a_{uv}$. 

Thus, if we consider the circuit shown in Fig. \ref{fig:alpha2}, the input is given by $|i_0 \rangle \langle i_0|\otimes |+\rangle \langle+|$. First, the second undergoes the noise channel $\alpha_2$ and which, given the previous argument, can now be expressed as: 
\begin{equation}
\big(|i_0 \rangle \langle j_0| \otimes  \alpha_2( |+\rangle \langle +|) \big)=\big(|i_0 \rangle \langle j_0| \otimes  \tilde{\alpha}_2( |+\rangle \langle +|) \big).
\end{equation}
Next, the two qubits are entangled via the control-$Z$ gate: 
\begin{equation}
\text{CZ}\big(|i_0 \rangle \langle j_0| \otimes  \alpha_2( |+\rangle \langle +|) \big)CZ = |i_0 \rangle \langle j_0| \otimes \sum_{m,u,v}  \tilde{a}_{u 0}^{(m)}  \tilde{a}_{v 0}^{(m)*} Z^uH|i_0\rangle \langle j_0|H Z^v,
\end{equation}
Finally, the first qubit is measured in the equatorial basis $|s_k\rangle =e^{-i\frac{\phi}{2}Z}Z^k |+\rangle$, with measurement outcome $k$ obtained:
\begin{equation}
 P_{sk}\text{CZ}\big(|i_0 \rangle \langle j_0| \otimes  \alpha_2( |+\rangle \langle +|) CZ\big) P_{sk}=\langle + | Z^k e^{i \frac{\phi}{2}Z}|i_0 \rangle \langle j_0| e^{-i\frac{\phi}{2}Z}Z^k |+\rangle P_{s_k} \otimes \sum_{m,u,v} \tilde{a}_{u 0}^{(m)}  \tilde{a}_{v 0}^{(m)*} Z^uH|i_0\rangle \langle j_0|H Z^v , 
\end{equation} 
where $P_{sk}= |s_k(\phi)\rangle \langle s_k(\phi)|$. This can be seen to be equal to the following expression: 
\begin{equation}
P_{sk}\text{CZ}\big(|i_0 \rangle \langle j_0| \otimes  \alpha_2( |+\rangle \langle +|) CZ\big) P_{sk}=	\frac{1}{2} e^{i \frac{\phi}{2}(-1)^{i_0}}e^{-i \frac{\phi}{2}(-1)^{j_0}}(-1)^{(i_0 + j_0)k}P_{sk} \otimes \sum_{m,u,v} \tilde{a}_{u 0}^{(m)}  \tilde{a}_{v 0}^{(m)*} Z^uH|i_0\rangle \langle j_0|H Z^v.
\end{equation}
By rearranging the coefficients onto the state of the second qubit, the original channel $\varepsilon_{k}$ can be recognised to have been applied to the original input $|i_0 \rangle \langle j_0|$:
\begin{equation}
P_{sk}\text{CZ}\big(|i_0 \rangle \langle j_0| \otimes  \alpha_2( |+\rangle \langle +|) CZ\big) P_{sk}=\frac{1}{2} P_{s_k} \otimes \sum_{m,u,v} \tilde{a}_{u 0}^{(m)}  Z^u \varepsilon_{k} (|i_0 \rangle \langle j_0|) ( \tilde{a}_{v 0}^{(m)} Z^v)^\dag, 
\end{equation}
which can be written as: 
\begin{equation}
P_{sk}\text{CZ}\big(|i_0 \rangle \langle j_0| \otimes  \alpha_2( |+\rangle \langle +|) CZ\big) P_{sk}=\frac{1}{2} P_{s_k} \otimes \sum_{m} \tilde{K}_m \varepsilon_{k} (|i_0 \rangle \langle j_0|) \tilde{K}_m^\dag. 
\end{equation}
Thus, the output on the second qubit is given by $\tilde{\alpha}_2 \circ \varepsilon_{k} (|i_0 \rangle \langle j_0|)$. 
\end{proof}	

\subsection{Proof of Lemma $4$} \label{app: rotated}
\begin{proof}
The quantum channel $\alpha_3(\rho)=\sum_m K_m \rho K_m^\dag$ is applied to the first qubit just before a measurement is performed. Each Kraus operator is decomposed in the rotated Pauli basis, where the rotation is around the $Z$ axis by an angle $\phi$, that is $e^{-i\frac{\phi}{2} Z} \sigma_{gh}e^{i \frac{\phi}{2}Z}$. Thus, we can express each Kraus operator as $K_m=\sum_m a_{gh}i^{gh} e^{-i\frac{\phi}{2} Z} X^g Z^h e^{i\frac{\phi}{2} Z}$ and thus $K_m^\dag=\sum_m a_{gh}^{(m)*}(-i)^{gh} e^{-i\frac{\phi}{2} Z}  Z^h X^g e^{i\frac{\phi}{2} Z}$. 

We now study the effect of the operator $K_m^\dag$ on an equatorial basis state $|s(\phi)_k \rangle =  e^{-i \frac{\phi}{2} Z} Z^k |+\rangle $:
\begin{align}
	K_m^\dag e^{-i \frac{\phi}{2} Z} Z^k |+\rangle &= \sum_{gh} a_{gh}^{(m)*} (-i)^{gh}  e^{-i\frac{\phi}{2} Z} Z^h X^g e^{i \frac{\phi}{2}Z} e^{-i \frac{\phi}{2} Z} Z^k |+\rangle,  \\
	&= \sum_{gh} a_{gh}^{(m)*} (-i)^{gh}(-1)^{gk} e^{-i \frac{\phi}{2}Z} Z^h Z^k |+\rangle, \\
	&=\sum_u \tilde{a}_{uk}^{(m)*} Z^u e^{-i \frac{\phi}{2}Z}Z^k |+\rangle,
\end{align}
where $\tilde{a}_{uk}^{(m)*}= \sum_{v} (-i)^{uv}(-1)^{kv}a_{vu}^{(m)*}$. This allows us to define the new modified Kraus operator $\tilde{K}_m^\dag= \sum_u \tilde{a}_{uk}^{(m)*} Z^u$, and associated quantum channel $\tilde{\alpha}_{3,k}$, where the index $k$ has been appended in order to emphasis the dependency on the measurement outcome $k$. 

Thus, the input to the circuit depicted in Fig. \ref{fig:noisy measurement} is $|i_0 \rangle \langle j_0| \otimes |+\rangle \langle +|$. The two qubits ae entangled by a control-$Z$ gate, and then the quantum channel $\alpha_3$ is applied to the first qubit, which can be transformed to $\tilde{\alpha}_3$:
\begin{equation}
	\sum_m \langle + | Z^k e^{i \frac{\phi}{2}Z}  \tilde{K}_m |i_0 \rangle \langle j_0| \tilde{K}_m^\dag e^{-i \frac{\phi}{2}Z}Z^k|+\rangle P_{s_k} \otimes H |i_0 \rangle \langle j_0| H,  
\end{equation} 
or equivalently: 
\begin{equation}
\sum_m a_{uk}^{(m)} a_{vk}^{(m)*} \langle + | Z^k e^{i \frac{\phi}{2}Z}  Z^u|i_0 \rangle \langle j_0|Z^v  e^{-i \frac{\phi}{2}Z}Z^k|+\rangle P_{s_k} \otimes H |i_0 \rangle \langle j_0| H. 
\end{equation} 
By expanding out this expression, we obtain: 
\begin{equation}
\frac{1}{2}	\sum_{m,u,v}   \tilde{a}_{uk}^{(m)} \tilde{a}_{vk}^{(m)} e^{i(-1)^{i_0}\frac{\phi}{2}} (-1)^{i_0(k+u)} e^{-i(-1)^{j_0} \frac{\phi}{2}}(-1)^{i_0(k+v)} P_{sk}\otimes H |i_0 \rangle \langle j_0| H.
\end{equation}
This can alternatively be expressed as: 
\begin{equation}
	\frac{1}{2} P_k \otimes \sum_{m,u,v} \tilde{a}_{uk}^{(m)} \tilde{a}_{vk}^{(m)*}  H Z^k e^{i \frac{\phi}{2} Z}  Z^u |i_0 \rangle \langle j_0 |Z^v e^{-i\frac{\phi}{2}Z}  Z^k H, 
\end{equation}
where we recognise the initial channel acting on the output: 
\begin{equation}
\frac{1}{2} P_k \otimes \sum_{m}  H Z^k  e^{i \frac{\phi}{2} Z} \tilde{K}_m |i_0 \rangle \langle j_0 | \tilde{K}_m^\dag e^{-i\frac{\phi}{2}Z} Z^k H. 
\end{equation}
Thus, this results in the channel $\varepsilon_k \circ \tilde{\alpha}_{3,k}$ being applied to the input state, where we note the dependency on measurement outcome for both channels. 
\end{proof}			
\subsection{Proof of theorem \ref{theorem: noise block}} \label{app: theorem}
\begin{proof}
In order to determine the output of Fig. \ref{fig:arbNoise}, we will first consider the simpler case depicted in Fig. \ref{fig:two noise}. 
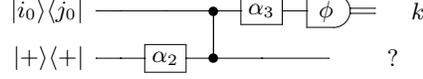
\begin{figure}[h!]
	\centering 
	\mbox{
		\Qcircuit @C=1em @R=.7em {
			\lstick{|i_0 \rangle \langle j_0|}	& \qw &\qw& \ctrl{1}  & \gate{\alpha_3} &\measureD{\phi} & \cw   & \rstick{k}   \\
			\lstick{|+\rangle \langle + |}   	 & \qw   &\gate{\alpha_2} & \ctrl{-1}   &\qw & \qw & \rstick{?}
		} }
		\caption[Noisy cluster state and measurement.]{\emph{Noisy cluster state and measurement.} The cluster state is affected by the noise channel $\alpha_2$ whilst the measurement will be affected by the noise channel $\alpha_3$.} 
		\label{fig:two noise}
	\end{figure}
The input to the circuit is given by $|i_0 \rangle \langle j_0| \otimes |+\rangle \langle +|$. Next, the noise channel $\alpha_2$ affects the cluster state, which can be transformed to $\tilde{\alpha}_2$ as discussed in lemma \ref{lemma: noisy resource}, before a control-$Z$ gate acts on the two qubits: 
\begin{equation}
\text{CZ}\big(|i_0 \rangle \langle j_0| \otimes  \alpha_2( |+\rangle \langle +|) \big)CZ = |i_0 \rangle \langle j_0| \otimes \sum_{m,u,v}  \tilde{a}_{u 0}^{(m)}  \tilde{a}_{v 0}^{(m)*} Z^uH|i_0\rangle \langle j_0|H Z^v.
\end{equation}
Next, the noise channel $\tilde{\alpha}_3$ is decomposed in the rotated Pauli basis, as discussed in lemma \ref{lemma: noisy meas} and applied to the first qubit: 	
\begin{equation}
\sum_{n,g,h} a_{gk}^{(n)} a_{hk}^{(n)*} \langle + | Z^k e^{i \frac{\phi}{2}Z}  Z^g|i_0 \rangle \langle j_0|Z^h  e^{-i \frac{\phi}{2}Z}Z^k|+\rangle P_{s_k} \otimes \sum_{m,u,v}  \tilde{a}_{u 0}^{(m)}  \tilde{a}_{v 0}^{(m)*} Z^uH|i_0\rangle \langle j_0|H Z^v.
\end{equation}
From here, we see that the coefficients can be directly moved onto the state of the second qubit, as in the proof of lemma \ref{lemma: noisy meas}: 
\begin{equation}
 P_{s_k} \otimes \sum_{m,u,v,n}  \tilde{a}_{u 0}^{(m)}  \tilde{a}_{v 0}^{(m)*} Z^ue^{i \frac{\phi}{2} Z} \tilde{K}_n |i_0 \rangle \langle j_0 | \tilde{K}_n^\dag e^{-i\frac{\phi}{2}Z} Z^k H Z^v.
\end{equation}
Thus, the channel $\tilde{\alpha}_2 \circ \varepsilon_k \circ \varepsilon_{3,k}$ was applied to the input. 
\end{proof}

In the more general case shown in Fig. \ref{fig:arbNoise}, the channel $\alpha_4 \circ \tilde{\alpha}_2 \circ \varepsilon_k \circ \varepsilon_{3,k} \circ \alpha_1 $ is applied to the input state. 

\subsection{Proof of Lemma $5$} \label{app: MPO}
\begin{proof}
	Consider a mixed state on $n$ qubits on a space $\mathcal{H}_{A}$, expressed in its diagonal basis: 
	\begin{equation}
	\rho^{[A]}=\sum_k p_k |v_k^{[A]}\rangle \langle v_k^{[A]}|.
	\end{equation}
	The Schmidt decomposition tells us that there exists an auxiliary system  $\mathcal{H}_{R}$, of identical dimension $2^n$, which we may add to the system. The resulting joint state $|\psi \rangle$ on the larger Hilbert space $\mathcal{H}_A \otimes \mathcal{H}_R$ is a pure state:
	\begin{equation}
	|\psi\rangle= \sum_k \sqrt{p_k} |v_k^{[A]} \rangle \otimes |v_k^{[R]}\rangle, 
	\end{equation}
	such that $\text{Tr}_R|\psi \rangle \langle \psi|=\rho{[A]}$. We know nothing about the entanglement induced by the purification between the system of interest and the reference system. We choose to picture them as shown in Fig. \ref{fig:Purify}, where we label the qubits of the system of interest from $1 \ldots n$, and the qubits from the reference system from $1' \ldots n'$.  
	\begin{figure}[h!] 
		\centering 
		\begin{tikzpicture}
		\definecolor{qu}{RGB}{77, 77, 255}
		\definecolor{cz}{RGB}{254, 68, 0}
		
		\node[fill, scale = 0.7] at (6,0) [circle,draw] {};
		\node[fill, scale = 0.7] at (4.5,0) [circle,draw] {};
		\node[fill, scale = 0.7] at (3,0) [circle,draw] {};
		\node[fill, scale = 0.7] at (1.5,0) [circle,draw] {};
		\node[fill, scale = 0.7] at (0,0) [circle,draw] {};
		\node[qu, fill=qu, scale = 0.7] at (0,2) [circle,draw] {};
		\node[qu, fill=qu, scale = 0.7] at (1.5,2) [circle,draw] {};
		\node[qu, fill=qu, scale = 0.7] at (3,2) [circle,draw] {};
		\node[qu, fill=qu, scale = 0.7] at (4.5,2) [circle,draw] {};
		\node[qu, fill=qu, scale = 0.7] at (6,2) [circle,draw] {};
		\draw (0,1) node[dotted, minimum height=3cm,minimum width=1cm,draw] {};
		\draw (1.5,1) node[dotted, minimum height=3cm,minimum width=1cm,draw] {};
		\draw (3,1) node[dotted, minimum height=3cm,minimum width=1cm,draw] {};
		\draw (4.5,1) node[dotted, minimum height=3cm,minimum width=1cm,draw] {};
		\draw (6,1) node[dotted, minimum height=3cm,minimum width=1cm,draw] {};
		\end{tikzpicture}
		\caption[Purification.]{\emph{Purification.}Two systems, in black the system of interest and in blue the reference system.} 
		\label{fig:Purify}
	\end{figure}
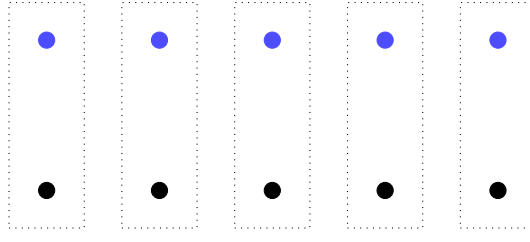
	
	The idea is now to consider the joint system of a real and auxiliary qubit, and apply the same proof as Vidal to these objects. Thus, we first partition the qubits between the utmost left real and auxiliary qubits, and the other $2(n-1)$ qubits. A Schmidt decomposition is then performed across this partition: 
	\begin{equation}
	|\psi \rangle = \sum_{\alpha_1} \lambda_{\alpha_1}^{[1,1']} |\Phi_{\alpha_1}^{[1,1']}\rangle \otimes |\Phi_{\alpha_1}^{[2,2' \ldots, n, n']}\rangle,
	\end{equation}
	where there are $\chi_{[1,1']}$ terms in this sum, a quantity proportional to the amount of entanglement present between the first qubits of the initial and reference systems and the rest. The first Schmidt vector is expanded in the computational basis, where the index $i$ is used for the real qubit and $s$ for the virtual qubit:
	\begin{equation}
	|\Phi_{\alpha_1}^{[1,1']}\rangle= \sum_{i_1, s_1} \Gamma_{\alpha_1}^{[1,1'] i_1 s_1} |i_1 s_1\rangle,
	\end{equation}
	thus yielding the state:
	\begin{equation}
	|\psi\rangle= \sum_{\alpha_1, i_1} \lambda_{\alpha_1}^{[1,1']}  \Gamma_{\alpha_1}^{[1,1'] i_1 s_1} |i_1s_1\rangle \otimes  |\Phi_{\alpha_1}^{[2 \ldots n']} \rangle .
	\end{equation}
	Now, the second Schmidt vector is expressed as: 
	\begin{equation}
	|\Phi_{\alpha_1}^{[2 \ldots n']}\rangle = \sum_{i_2 s_2} |i_2 s_2\rangle \otimes |\tau_{\alpha_1 i_2 s_2}^{[3 \ldots n']}\rangle,
	\end{equation}
	and thus $|\psi\rangle$ may be written: 
	\begin{equation}
	|\psi\rangle= \sum_{\alpha_1, i_1, i_2} \lambda_{\alpha_1}^{[1,1']}  \Gamma_{\alpha_1}^{[1,1'] i_1 s_1} |i_1 s_1\rangle \otimes |i_2 s_2\rangle \otimes |\tau_{\alpha_1 i_2 s_2}^{[3 \ldots n']}\rangle.
	\end{equation}
	The arbitrary state on the qubits $[3, \ldots, n']$ is expressed in the Schmidt bases for the partition $[3, \ldots, n']$:
	\begin{equation}
	|\tau_{\alpha_1 i_2 s_2}^{[3 \ldots n']}\rangle=\sum_{\alpha_2} \Gamma_{\alpha_1 \alpha_2}^{[2,2']i_2 s_2} \lambda_{\alpha_2}^{[2,2']}|\Phi_{\alpha_2}^{[3 \ldots n']}\rangle, 
	\end{equation}
	which thus give: 
	\begin{equation}
	|\psi\rangle= \sum_{\alpha_1, i_1, i_2, \alpha_2} \lambda_{\alpha_1}^{[1,1']}  \Gamma_{\alpha_1}^{[1,1'] i_1}  \Gamma_{\alpha_1 \alpha_2}^{[2,2']i_2 s_2} \lambda_{\alpha_2}^{[2,2']} |i_1 s_1\rangle \otimes |i_2 s_2\rangle \otimes |\Phi_{\alpha_2}^{[3 \ldots n']}\rangle.
	\end{equation}
	This process is repeated, until finally we obtain: 
	\begin{equation}
	|\psi\rangle= \sum_{\alpha_1, i_1\ldots  \alpha_n} \lambda_{\alpha_1}^{[1,1']}  \Gamma_{\alpha_1}^{[1,1'] i_1 s_1}  \Gamma_{\alpha_1 \alpha_2}^{[2,2']i_2 s_2} \lambda_{\alpha_2}^{[2,2']} \ldots \Gamma_{\alpha_n}^{[n,n']i_n s_n}|i_1 s_1 \ldots i_n s_n\rangle.
	\end{equation}
	The Schmidt coefficients are then absorbed into the tensors: 
	\begin{equation}
	|\psi\rangle= \sum_{\alpha_1, i_1\ldots  \alpha_n} \tilde{ \Gamma}_{\alpha_1}^{[1,1'] i_1 s_1} \tilde{ \Gamma}_{\alpha_1 \alpha_2}^{[2,2']i_2 s_2}  \ldots\tilde{ \Gamma}_{\alpha_n}^{[n,n']i_n s_n}|i_1 s_1 \ldots i_n s_n\rangle.
	\end{equation}
	The alpha indices can be thought of as implementing the multiplication between these different tensors, and thus, by associating vectors $v$ with tensors with one index and matrices $A$ for those with two, we obtain:
	\begin{equation}
	|\psi\rangle= \sum_{ i_1 s_1 \ldots i_n s_n}   v [i_1, s_1] A[i_2, s_2] \dots v[i_n, s_n] |i_1 s_1 \ldots i_n s_n\rangle, 
	\end{equation}
	where we have also assumed translation independence. By relabelling, we finally get: 
	\begin{equation}
	|\psi\rangle= \sum_{ i_1 s_1 \ldots i_n s_n}   v[i_n, s_n] A[i_{n-1}, s_{n-1}] \dots v[i_1, s_1] |i_1 s_1 \ldots i_n s_n\rangle.
	\end{equation}
	These tensors now are not only dependent on the state of the real qubits, but also on the state of the auxiliary qubits. Whereas previously the dimensions of the matrices were upper bounded by the maximum Schmidt rank over the $(n-1)$ partitions, their dimension is now dependent on the amount of entanglement induced by the reference system.
	
	Next, we can trace over the reference system and obtain an expression for the state of the system, and by defining the matrix $\rho[i_1, j_1]=\sum_{s_1} v^\dag [i_1, s_1] v[j_1, s_1]$, where $v[i_1, s_1] $ denotes a line vector and $\vec{v}^\dag[j_1, s_1]$ a column vector, we thus obtain: 
	\begin{equation}
	\rho^{[A]}= \sum_{s} \sum_{i} v[i_n, s_n] \dots  A[i_2,s_2]\rho [i_1, j_1]A^\dag[j_2,s_2]  \dots  v^\dag[j_n, s_n] |i_1 \ldots i_n \rangle \langle j_1 \ldots j_n|.
	\end{equation}
	The superoperator $ \varepsilon^{[i_{k}, j_{k}]}$ is defined as: 
	\begin{equation}
	\varepsilon^{[i_{k}, j_{k}]}(\rho)= \sum_s A[i_{k},s] \rho A^\dag[j_{k},s], 
	\end{equation}
	and 
	\begin{equation}
	\sigma^{[i,j]}(A)=\sum_s v^\dag[i, s] A v[j, s].
	\end{equation}
	Thus, the mixed state is finally given by: 
	\begin{equation}
	\rho^{A}= \sum \sigma^{[i_n, j_n]} \circ \varepsilon^{[i_{n-1}, j_{n-1}]}\circ \dots  \varepsilon^{[i_{n-1}, j_{n-1}]}(\rho [i_1, j_1])  |i_1 \ldots i_n \rangle \langle j_1 \ldots j_n|.
	\end{equation}
	
	This initial state is thus acted upon by sequence of $n-2$ superoperators denoted by $\varepsilon$, and where $\circ$ denotes the composition of superoperators. Finally, the operator $\sigma^{i_n, j_n}$ acts on the evolved logical state, and is analogous to a measurement. The key difference with MPS is the additional index $s$ which arise in the logical operators $A[k,s]$. 
\end{proof}

\subsection{Proof of Proposition $3$} \label{app: pauli on MPO}
\begin{proof}
	The MPS representation of a cluster state is given by $|\psi\rangle=\sum \langle i_n| A[i_{n-1}] \ldots A[i_2] A[i_1]|+\rangle |i_1 \ldots i_n\rangle$. If a Pauli $X$ is applied to the $j^{th}$ qubit, then the state evolves to:
	\begin{equation}
	X_j |\psi \rangle =\sum(-1)^{i_n i_{n-1}} \ldots (-1)^{ i_{j+1}i_j} (-1)^{i_ji_{j-1}}\ldots (-1)^{i_1 i_2}|i_1 \ldots i_{j-1}\rangle |i_j \oplus 1\rangle |i_{j+1}\ldots i_n\rangle. 
	\end{equation}
	By relabelling: 
	\begin{align}
	X_j |\psi\rangle&=\sum(-1)^{i_n i_{n-1}} \ldots  (-1)^{ i_{j+1}(i_j \oplus1)} (-1)^{(i_j \oplus 1)i_{j-1}} \ldots (-1)^{i_1 i_2}|i_1 \ldots i_{j-1}\rangle |i_j\rangle |i_{j+1}\ldots i_n\rangle,\\
	&= \sum \langle i_n| A[i_{n-1}] \ldots H|i_{j+1}\rangle \langle i_{j+1}| H |i_j \oplus 1\rangle \langle i_j \oplus 1\rangle H|i_{j-1}\rangle \langle i_{j-1}| \ldots A[i_1]|+\rangle |i_1 \ldots i_n\rangle ,\\
	&= \sum \langle i_n| A[i_{n-1}] \ldots A[i_{j+1}] ZA[i_j]X A[i_{j-1}] \ldots A[i_1]|+\rangle |i_1 \ldots i_n\rangle, 
	\end{align}
	and we can interpret this as the logical operator acting on the $j$th qubit being mapped to $ZA[k]X=\sigma_{01}A[i_j]\sigma_{10}$. 
	
	If a Pauli $Z$ is applied to the $j^{th}$ qubit, then the state evolves to: 
	\begin{equation}
	Z_j |\psi \rangle=\sum(-1)^{i_n i_{n-1}} \ldots (-1)^{ i_{j+1}i_j} (-1)^{i_ji_{j-1}}\ldots (-1)^{i_1 i_2} (-1)^{i_j}|i_1 \ldots i_n\rangle.
	\end{equation}
	The effect on the logical operators can be expressed in two ways. Either: 
	\begin{align}
	Z_j |\psi\rangle&=\sum \langle i_n| \ldots A[i_{j+1}] H|i_j\rangle \langle i_j|Z A[i_{j-1}] \ldots |+\rangle  |i_1 \ldots i_n\rangle,\\
	&=\sum \langle i_n|\ldots A[i_{j+1}] A[i_j] Z A[i_{j-1}] \ldots|+\rangle  |i_1 \ldots i_n\rangle,
	\end{align}
	where we can now define $\tilde{A}[k]=A[k]Z$, or alternatively: 
	\begin{align}
	Z_j |\psi\rangle&=\sum  \langle i_n|A[i_{n-1}] \ldots A[i_{j+1}] XH|i_j\rangle \langle i_j| A[i_{j-1}] \ldots A[i_1]|+\rangle   |i_1 \ldots i_n\rangle, \\
	&  =\sum \langle i_n|A[i_{n-1}] \ldots A[i_{j+1}] XA[i_j] A[i_{j-1}] \ldots A[i_1]|+\rangle   |i_1 \ldots i_n\rangle,
	\end{align}
	where we can now define $\tilde{A}[k]=XA[k]$. Thus, when the Pauli $Z=\sigma_{01}$ operator is applied, the logical operators can be represented by two equivalent evolutions: $A[i_j]Z=A[i_j]\sigma_{01}=XA[i_j]=\sigma_{10}A[i_j]$. Finally, if a Pauli operator $Y=iXZ$ is applied to the $j^{th}$ qubit, then the state evolves to: 
	\begin{equation}
	Y_j |\psi \rangle = \sum (-1)^{i_n i_{n-1}} \ldots (-1)^{i_2 i_1}  i(-1)^{i_j} |i_1 \ldots i_{j-1}\rangle |i_j \oplus 1\rangle |i_{j+1}\ldots i_n\rangle.
	\end{equation} 
	Once again, its effect on the logical operators can be expressed in two ways: 
	\begin{align}
	Y_j |\psi\rangle&=\sum (-1)^{i_n i_{n-1}} \ldots (-1)^{ i_{j+1}i_j} (-1)^{i_ji_{j-1}}\ldots (-1)^{i_1 i_2}  (-1)^{i_j}|i_1 \ldots i_{j-1}\rangle|i_j \oplus 1\rangle |i_{j+1}\ldots i_n\rangle,\\
	&= \sum \langle i_n | A[i_{n-1}] \ldots H|i_j\rangle \langle i_j|Z \ldots A[i_1]|+\rangle |i_1, \ldots i_{j-1}\rangle |i_j \oplus 1\rangle |i_{j+1} \ldots i_n\rangle, 
	\end{align}
	which if we relabel: 
	\begin{align}
	Y_j |\psi\rangle&= \sum \langle i_n |  \ldots H|i_j \oplus 1\rangle \langle i_j \oplus 1|Z \ldots |+\rangle |i_1, \ldots i_{j-1}\rangle |i_j\rangle |i_{j+1} \ldots i_n\rangle,\\
	&=\sum \langle i_n |  \ldots ZA[i_j]XZ \ldots |+\rangle |i_1 \ldots  i_n\rangle,
	\end{align}
	where we can now define $\tilde{A}[i_j]=ZA[i_j]XZ=ZA[i_j]Y$.
	Alternatively, 
	\begin{align}
	Y_j |\psi\rangle &= \sum \langle i_n | \ldots HZ|i_j \oplus 1\rangle \langle i_j \oplus 1| \ldots|+\rangle |i_1, \ldots i_{j-1}\rangle |i_j\rangle |i_{j+1} \ldots i_n\rangle,\\
	&= \sum \langle i_n | \ldots HZX|i_j\rangle \langle i_j| X \ldots |+\rangle |i_1\ldots i_n\rangle,\\
	&= \sum \langle i_n | \ldots XZA[i_j]X \ldots |+\rangle |i_1 \ldots i_n\rangle,
	\end{align}
	where we can now define $\tilde{A}[i_j]=iXZA[i_j]X=YA[i_j]X$.
	Thus, when the Pauli operator $Y=\sigma_{11}$ acts on the state, the logical operators evolve to: $ZA[i_j]XZ=ZA[i_j]Y=\sigma_{01}A[i_j]\sigma_{11}=YA[i_j]X=\sigma_{11}A[i_j]\sigma_{10}$.
\end{proof}
\subsection{Proof of Proposition $5$}
\begin{proof}
The Kraus operators $K_m$ are decomposed in the Pauli bases: $K_m=\sum_{a,b} k_{ab}^{(m)}\sigma_{ab}^{(m)}$. Thus, when the CPTP map acts on the $l^{th}$ qubit, we have:
\begin{equation}
\eta_j(\rho)=\sum_m \sum_{i, j} \sum_{a,b,g,h}\sigma^{[i_n,j_n]}\circ  \ldots \circ \varepsilon^{[i_2, j_2]} ( |+ \rangle \langle +| )   |i_1\rangle \langle j_1| \ldots k_{ab}^{(m)} k_{gh}^{(m)*} \sigma_{ab}^{(m)} (|i_l\rangle \langle j_l|)\sigma_{gh}^{(m)\dag} \ldots |i_n\rangle \langle j_n|.
\end{equation}
By linearity, and applying proposition \ref{prop: Pauli Noise}, this can be expressed as: 
\begin{equation}
\eta_j(\rho)= \sum_{i, j} \sigma^{[i_n,j_n]}\circ \varepsilon^{[i_{n-1} j_{n-1}]} \circ \ldots \circ \tilde{\varepsilon}^{[i_l, j_l]} \circ \ldots \circ \varepsilon^{[i_2 j_2]} ( |+ \rangle \langle +| )  |i_1 \ldots i_n\rangle \langle j_1 \ldots j_n|,
\end{equation}
where we have defined $\tilde{\varepsilon}^{[i,j]}(\rho)=\sum_{m}  \xi_m(A[i])  \rho \xi_m^\dag(A[j])$, and where $\xi_m(\rho)=\sum_{a,b} k_{ab}^m \sigma_{0a}^{(m)} \rho  \sigma_{ab}^{(m)}$.
\end{proof}
\end{document}